\newcommand{\cptp}{\mathcal{E}}
\newcommand{\iden}{\mathbb{I}}
\newcommand{\qfi}{\mathcal F}
\newcommand{\PEqfi}{{\mathcal F}_H^{\mathrm{PE}}}
\newcommand{\PDqfi}{\mathcal F_{\theta}^{\mathrm{PD}}}
\newcommand{\tr}{\mathrm{tr}}
\newcommand{\pdone}{\mathcal{E}^{\mathrm{PD},1}_\theta}
\newcommand{\pd}{\mathcal{E}^{\mathrm{PD}}_{\theta}}
\newcommand{\cd}{\Delta}
\newcommand{\HDP}{\mathrm{HDP}}
\newcommand{\SHP}{\mathrm{SHP}}
\newcommand{\DIO}{\mathrm{DIO}}
\newcommand{\SIO}{\mathrm{SIO}}
\newcommand{\hdp}{\mathcal{E}^{\textrm{HDP}}}
\newcommand{\shp}{\mathcal{E}^{\textrm{SHP}}}
\newcommand{\NS}{S^n}
\renewcommand{\Re}{\operatorname{Re}}
\theoremstyle{definition}
\newtheorem{definition}{Definition}
\newtheorem{proposition}{Proposition}
\newtheorem{lemma}{Lemma}
\newtheorem{corollary}{Corollary}
\begin{document}

\preprint{}

\title{Resource theory of dephasing estimation in multiqubit systems}
\author{Zishi Chen}

\author{Xueyuan Hu}
\email{xyhu@sdu.edu.cn}
\affiliation{School of Information Science and Engineering, Shandong University, Qingdao 266237, China}


\begin{abstract}
We present a resource theory to investigate the power of a multqubit system as a probe in the task of dephasing estimation. 
Our approach employs the quantum Fisher information about the dephasing parameter as the resource measure. 
Based on the monotonicity of quantum Fisher information, we propose two sets of free operations in our resource theory, the Hamming distance preserving operations and the selectively Hamming distance preserving operations. 
We derive a necessary condition for the state transformation under these free operations and demonstrate that uniform superposition states are the golden states in our resource theory. 
We further compare our resource theory with the resource theory of coherence and thoroughly investigate the relation between their free operations in both single-qubit and multiqubit cases. 
Additionally, for multiqubit systems, we discover the incompatibility between the resource theory of dephasing estimation and that of $U(1)$ asymmetry, which is responsible for phase estimation.
The condition for enhancing  the performance of a probe state in phase estimation while preserving its ability in dephasing estimation is also discussed.
Our results provide new insights into quantum parameter estimation by the resource-theoretic approach. 
\end{abstract}

\pacs{03.65.Ta, 03.65.Yz, 03.67.Mn}
\maketitle
\section{INTRODUCTION}
In quantum resource theories \cite{RevModPhys.91.025001, PhysRevLett.119.140402, PhysRevA.90.014102,2022arXiv2212.11105, PhysRevLett.126.090401}, certain quantum properties of physical systems are treated as resources. 
By the resource-theoretic approach, the advantages of these quantum resources are quantified for various quantum information processing tasks such as quantum teleportation \cite{PhysRevA.106.012433}, channel discrimination \cite{PhysRevA.93.042107, PhysRevLett.122.140402} and quantum key distribution \cite{PhysRevA.102.012617}. 
Moreover, the quantum resource theory provides a general framework to characterize laws for state transformations under certain classes of restricted operations called free operations, and thus sheds new light on the research field of fundamental physics, including thermodynamics \cite{Lostaglio_2019}.

Quantum parameter estimation is a task which employs quantum properties to make precise estimation of given parameters \cite{Liu_2020, Toth_2014, RevModPhys.90.035006, doi:10.1116/1.5119961,Lu2015}. For instance, in phase estimation, maximally entangled probe states can improve the error scaling of the estimator \cite{Toth_2014, PhysRevLett.96.010401}. This improvement inspires researchers to investigate the role of quantum properties in quantum parameter estimation \cite{RevModPhys.90.035005}. 
It has been found that there are quantum enhanced estimation tasks without entanglement
\cite{PhysRevA.81.022108, RevModPhys.90.035006}, indicating that entanglement is not the unique resource which underlies the quantum advantage in parameter estimation. In particular, for the estimation of noise in teleportation-covariant channels, including phase damping (PD) channels \cite{nielsen2002quantum,PhysRevA.94.052108}, entanglement is not a necessary condition for the optimal probe states \cite{RevModPhys.90.035006, Kolodynski_2013, PhysRevLett.118.100502, Pirandola2018}. 

Recently, the notions of quantum parameter estimation have been introduced to the framework of resource theories and made significant progresses. The quantum Fisher information (QFI), which represents the precision limit of probe states \cite{PhysRevLett.72.3439}, can be used as a resource measure of coherence and asymmetry \cite{PhysRevA.99.012329, PhysRevA.90.014102, PhysRevA.93.022122, PhysRevLett.127.200402, 2022arXiv220503245K}, due to its nice mathematical properties such as convexity and monotonicity \cite{Liu_2020, PhysRevA.91.042104}. 
Furthermore, by the resource-theoretic approach, an operational interpretation of the QFI is proposed in quantum thermodynamics \cite{PhysRevLett.129.190502}. 
In this work, we propose and give answers to the following questions: What is the nature of the quantum resource for dephasing estimation? How does this resource relate to some well-known resources, such as quantum coherence and $U(1)$ asymmetry?

We establish a resource theory of dephasing estimation in multiqubit systems. We first prove that incoherent states are free states in this present resource theory. Due to the monotonicity of QFI, we define two sets of free operations in the resource theory of dephasing estimation, the Hamming distance preserving operations (HDP) and the selectively Hamming distance preserving operations (SHP). Then we derive a necessary condition for the transformed states under HDP. Furthermore, uniform superposition states can be transformed into any state under SHP, that is, uniform superposition states are golden states in the resource theory of dephasing estimation. Moreover, by comparing our resource theory and the resource theory of quantum coherence, we show the relation between their free operations in both single-qubit and multiqubit cases. Finally, we consider the resource theory of $U(1)$ asymmetry. By employing SHP, it is possible to improve the performance of the probe state in phase estimation, while also preserving the ability of dephasing estimation.
\section{NOTIONS AND PRELIMINARIES}
\label{PRELIMINARIES}
We denote the Hilbert space associated with the qubits of our consideration by $\mathcal{H}$. Let $\mathcal{L}(\mathcal{H})$ be the set of liner operators on $\mathcal{H}$. Any quantum operation on a physical system can be mathematically represented by a completely-positive and trace-preserving (CPTP) map $\mathcal{E}:\mathcal{L}(\mathcal{H})\rightarrow\mathcal{L}(\mathcal{H})$. In this paper, we focus on the situations where the input and output spaces are the same. We use $\iden$ as the identity matrix of the input (output) space. In the following, we briefly review the definitions of the Fisher information (FI), the QFI, and the PD channels \cite{Liu_2020, nielsen2002quantum}.

\subsection{FISHER INFORMATION (FI) AND QUANTUM FISHER INFORMATION (QFI)}
In a general parameter estimation task, a probe system, initially in state $\rho$, is sent through a parameterized quantum channel $\mathcal{E}_\theta$. The output state $\rho(\theta)\equiv\mathcal{E}_\theta(\rho)$ is then measured by a positive-operator-valued measurement (POVM)  $\{M_x\}$, where $M_x$ is the positive semi-definite operator satisfying $\sum_{x}M_x=\iden$, and each measurement result $x$ is obtained with the probability $p(x|\theta)=\tr(\rho(\theta) M_x)$.
Finally, the parameter $\theta$ is estimated based on the probability distribution of measurement results.

The Cram\'{e}r-Rao bound provides a lower bound for the variance of an unbiased estimator \cite{Liu_2020, RevModPhys.90.035006, Toth_2014}
\begin{equation}
	{\rm Var}(\theta) \ge \dfrac{1}{F(\theta)} \ge \dfrac{1}{\qfi(\rho(\theta))}.
\end{equation}
Here the Fisher information $F(\theta)$ is defined as \cite{Liu_2020, RevModPhys.90.035006, PhysRevLett.127.200402}
\begin{equation}
	F(\theta)=\sum_{x}\dfrac{(\partial p(x|\theta))^2}{p(x|\theta)},
\end{equation}
where $\partial \coloneqq \partial/\partial\theta$. It depends on both the probe state $\rho$ and the measurement $\{M_x\}$.
The QFI $\qfi(\rho(\theta))$ is defined as \cite{Liu_2020, RevModPhys.90.035006, Toth_2014, PhysRevLett.127.200402}
\begin{equation}
	\qfi(\rho(\theta))=\tr(\rho(\theta) L^2),
\end{equation}
where $L$ is the symmetric logarithmic derivative (SLD) and defined by $\frac12(L\rho(\theta)+\rho(\theta) L)=\partial \rho(\theta)$. The QFI satisfies the following properties \cite{Liu_2020, RevModPhys.90.035006}.\\
(P1) Additivity. $\qfi(\rho(\theta)\otimes\rho'(\theta))=\qfi(\rho(\theta))+\qfi(\rho'(\theta))$. It implies $\qfi(\rho(\theta)^{\otimes n})=n\qfi(\rho(\theta))$.\\
(P2) Convexity. $\qfi(\sum_jp_j\rho_j(\theta))\leq\sum_jp_j\qfi(\rho_j(\theta))$.\\
(P3) Monotonicity. $\qfi(\rho(\theta))\geq\qfi(\cptp(\rho(\theta)))$, for any CPTP map $\cptp$ which is independent of $\theta$.\\

\subsection{PHASE DAMPING CHANNELS}

In this paper, we focus on the phase damping (PD) channel for single-qubit and multiqubit states, which is denoted by $\pd$. 

A single-qubit state $\sigma=\frac12(\iden+r\cos{\phi}\sigma_x+r\sin{\phi}\sigma_y+z\sigma_z)$, where $\sigma_x$, $\sigma_y$ and $\sigma_z$ are two-dimensional Pauli matrices, can be represented as a Bloch vector $(r\cos{\phi},r\sin{\phi},z)$, where $-1\le z \le 1$, $0\le r \le 1$ and $0\le \phi < 2\pi$. Then the density matrix of the state $\sigma$ can be written as
\begin{equation}
	\label{one_qubit_density}
	\sigma=\dfrac{1}{2}\begin{bmatrix}
		1+z&re^{-i\phi}\\
		re^{i\phi}&1-z
	\end{bmatrix}.
\end{equation}
After the action of a PD channel \cite{nielsen2002quantum}, the state becomes
\begin{equation}
	\pdone(\sigma)=\dfrac{1}{2}\begin{bmatrix}
		1+z&re^{-\theta-i\phi}\\
		re^{-\theta+i\phi}&1-z
	\end{bmatrix},
\end{equation}
where $\theta$ is the parameter of the PD channel. 

Now consider a general $n$-qubit state $\rho=\sum_{x=0,y=0}^{2^n-1,2^n-1}\rho_{xy}|x\rangle \langle y|$, where $x$ and $y$ are binary strings of length $n$ and $\rho_{xy}=\langle x|\rho|y\rangle$. Let each qubit transmit through a PD channel $\pdone$, and the output $n$-qubit state reads
\begin{equation}\label{eq:pd}
	\pd(\rho)=\sum_{x=0,y=0}^{2^n-1,2^n-1}e^{-h(x,y)\theta}\rho_{xy}|x\rangle \langle y|,\ n=1,2,...
\end{equation}
where $h(i,j)$ is the Hamming distance between two binary strings $i$ and $j$. 
The mathematical definition of the Hamming distance is $h(i,j)=\sum_{m=1}^{n}i_m\oplus j_m$, where $i_m(j_m)$ represents the $m$th bit of the string $i(j)$. The task, which is to estimate $\theta$ by a probe state, is called dephasing estimation.

\section{STRUCTURE OF THE RESOURCE THEORY OF DEPHASING ESTIMATION}
\label{STRUCTURE}
In the resource theory of dephasing estimation, we use the QFI as a resource measure. 
Precisely, we define the power of a probe state $\rho$ in dephasing estimation as follows
\begin{equation}
	\PDqfi(\rho):=\qfi(\pd(\rho)).
\end{equation}

The free states are those which do not have the ability of dephasing estimation, i.e., $\rho$ is a free state if and only if $\PDqfi(\rho)=0$. 
Apparently, an incoherent state $\rho_{\mathrm{inc}}$ on computational basis belongs to the set of free states, because $\pd(\rho_{\mathrm{inc}})$ is independent of $\theta$.
In the following, we prove that any state which is not incoherent is a resourceful state.

Assume that there is at least one non-diagonal element $\rho_{xy}=\left|\rho_{xy} \right|e^{-i\beta} \neq0$ in the probe state $\rho_{\mathrm{coh}}$. After the action of $\pd$, one implements the following measurement
\begin{equation}
	\begin{split}
		&M_1=(|x\rangle\langle x|+e^{-i\beta}|x\rangle\langle y|+e^{i\beta}|y\rangle\langle x|+|y\rangle\langle y|)/2,\\
		&M_2=(|x\rangle\langle x|-e^{-i\beta}|x\rangle\langle y|-e^{i\beta}|y\rangle\langle x|+|y\rangle\langle y|)/2,\\
		&M_3=\iden-|x\rangle\langle x|-|y\rangle\langle y|,
	\end{split}
\end{equation}
and obtains the outcome $a\in\{1,2,3\}$ with probability $p(a|\theta)=\tr(\pd(\rho_{\mathrm{coh}}) M_a)$. Direct calculation leads to
\begin{equation}
	\begin{split}
		p(1|\theta)&=\dfrac{1}{2}(\rho_{xx}+\rho_{yy}+2\left|\rho_{xy} \right|e^{-h(x,y)\theta}),\\
		p(2|\theta)&=\dfrac{1}{2}(\rho_{xx}+\rho_{yy}-2\left|\rho_{xy} \right|e^{-h(x,y)\theta}),\\
		p(3|\theta)&=1-p(1|\theta)-p(2|\theta).
	\end{split}
\end{equation}
It follows that
\begin{equation}
	\begin{split}
		F(\theta)&=\sum_{a=1}^{3}\dfrac{(\partial p(a|\theta))^2}{p(a|\theta)}\\
		&=\dfrac{[h(x,y)\left|\rho_{xy} \right|e^{-h(x,y)\theta}]^2}{p(1|\theta)}\\&+\dfrac{[h(x,y)\left|\rho_{xy} \right|e^{-h(x,y)\theta}]^2}{p(2|\theta)}>0.
	\end{split}
\end{equation}
Because the QFI is lower bounded by the FI, we have $\PDqfi(\rho_{\mathrm{coh}})\geq F(\theta)>0$. Therefore, a free state in the resource theory of dephasing estimation is of the following form
\begin{equation}
    \rho_{\mathrm{free}}=\sum_{x=0}^{2^n-1}\rho_x |x\rangle\langle x|.
\end{equation}

As for free operations, the minimum requirement is that they do not increase the resource of states. Nevertheless, considering the complexity in calculating the QFI, defining the free operations to be the whole set of $\PDqfi$-nonincreasing operations would cause difficulties in analysing the properties of the present resource and generalizing the method of studying the resources in dephasing estimation. With this in mind, we propose two sets of free operations in the following.

The first set of free operations $\mathbb{F}_1$ is defined as the set of all CPTP maps commutative with any PD channel,
\begin{equation}
    \cptp\circ\pd=\pd\circ\cptp,\ \forall \cptp\in\mathbb{F}_1,\label{eq:F1}
\end{equation}
and will be thoroughly studied in Sec. \ref{subsection:HDP}. The second set of free operations $\mathbb{F}_2$, studied in Sec. \ref{subsection:SHP}, is defined as those with a Kraus decomposition $\cptp=\sum_j\mathcal{K}_j$ (with $\mathcal{K}_j(\cdot)=K_j(\cdot) K_j^\dagger$ and $K_j$ being the Kraus operators satisfying $\sum_j K_j^\dagger K_j=\iden$), such that each Kraus branch $\mathcal{K}_j$ can commute with PD channels, i.e.,
\begin{equation}
    \mathcal{K}_j\circ\pd=\pd\circ\mathcal{K}_j,\ \forall j,\  \forall \cptp\in\mathbb{F}_2.\label{eq:F2}
\end{equation}
By definition, $\mathbb{F}_2\subseteq \mathbb{F}_1$.
The detailed comparison between these two sets will be discussed in Sec. \ref{subsection:HDP_SHP_relation}.

The monotonicity of $\PDqfi$ under these free operations can be proved by employing the monotonicity (P3) of the QFI. Precisely, if $\cptp\in\mathbb{F}_1$, we have
\begin{equation}
    \PDqfi(\cptp(\rho))=\qfi(\cptp(\pd(\rho)))\leq\qfi(\pd(\rho))=\PDqfi(\rho),
\end{equation}
for any $n$-qubit state $\rho$. 

Also, it is worth noting that, even in the single-qubit case, $\mathbb{F}_1$ is a strict subset of the $\PDqfi$-nonincreasing operations. See \textbf{Appendix \ref{sec:appendix-a_0}} for detailed discussion.

\subsection{HAMMING DISTANCE PRESERVING OPERATIONS (HDP)}\label{subsection:HDP}
In this subsetion, we first give the definition of Hamming distance preserving operations (HDP) and then prove that HDP is equivalent to $\mathbb{F}_1$ defined in Eq.~(\ref{eq:F1}). Moreover, we derive a necessary condition for state transformations under HDP.

\begin{definition}
	A CPTP map $\hdp:\mathcal{L}(\mathcal{H}^n)\mapsto\mathcal{L}(\mathcal{H}^n)$ belongs to the set of Hamming distance preserving operations (HDP) if and only if
	\begin{equation}
		\label{D_HDP}
		\langle i|\hdp(|x\rangle\langle y|)|j\rangle=0,
	\end{equation}
	for all $i,j,x,y\in\NS$ satisfying $h(i,j)\neq h(x,y)$, where $\NS$ is the set of $n$-bit strings.
\end{definition}

The following proposition shows the equivalence between the Hamming distance preserving condition and the condition of commutativity with PD channels for completely positive maps.

\begin{proposition}\label{prop:1}
	Let $\Tilde{\cptp}$ be a completely positive (CP) map. Then, 
 \begin{equation}\label{eq:commu_pd}
     \Tilde{\cptp}\circ\pd=\pd\circ\Tilde{\cptp},
 \end{equation}
 if and only if
 	\begin{equation}
		\label{eq:hdp}
		\langle i|\Tilde{\cptp}(|x\rangle\langle y|)|j\rangle=0,
	\end{equation}
 for all $i,j,x,y\in\NS$ satisfying $h(i,j)\neq h(x,y)$.
 \end{proposition}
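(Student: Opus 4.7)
The key observation is that the phase damping channel $\pd$ already acts diagonally on the natural basis $\{|x\rangle\langle y|\}_{x,y\in\NS}$ of $\mathcal{L}(\mathcal{H}^n)$, with eigenvalues $e^{-h(x,y)\theta}$. So the plan is to reduce the commutation identity to an eigenvalue-matching condition and read off the Hamming-distance selection rule.

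First I would exploit linearity: since the basis matrices $|x\rangle\langle y|$ span $\mathcal{L}(\mathcal{H}^n)$, the operator identity \eqref{eq:commu_pd} is equivalent to $\Tilde{\cptp}\circ\pd(|x\rangle\langle y|)=\pd\circ\Tilde{\cptp}(|x\rangle\langle y|)$ for every pair $x,y\in\NS$. Using \eqref{eq:pd}, the left-hand side is just $e^{-h(x,y)\theta}\,\Tilde{\cptp}(|x\rangle\langle y|)$. For the right-hand side I would expand $\Tilde{\cptp}(|x\rangle\langle y|)=\sum_{i,j}\langle i|\Tilde{\cptp}(|x\rangle\langle y|)|j\rangle\,|i\rangle\langle j|$ and apply \eqref{eq:pd} termwise, obtaining $\sum_{i,j}e^{-h(i,j)\theta}\langle i|\Tilde{\cptp}(|x\rangle\langle y|)|j\rangle\,|i\rangle\langle j|$.

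Subtracting and taking matrix elements in the computational basis then collapses the commutation relation to the scalar conditions
\begin{equation}
\bigl(e^{-h(x,y)\theta}-e^{-h(i,j)\theta}\bigr)\,\langle i|\Tilde{\cptp}(|x\rangle\langle y|)|j\rangle=0,
\end{equation}
for all $i,j,x,y\in\NS$. From here both directions are immediate. For the ``only if'' direction, whenever $h(i,j)\neq h(x,y)$ the prefactor is nonzero (for any $\theta>0$), so the matrix element must vanish, giving \eqref{eq:hdp}. For the ``if'' direction, condition \eqref{eq:hdp} forces the matrix element to vanish precisely on the pairs where the prefactor is nonzero, and on the pairs with $h(i,j)=h(x,y)$ the prefactor itself vanishes, so every term of the identity is satisfied.

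I do not expect any real obstacle here: the argument is essentially an eigenbasis matching, and complete positivity of $\Tilde{\cptp}$ plays no role (only linearity is used), so the statement actually holds for arbitrary linear maps. The one point worth stating carefully is that the proposition is being read at a fixed parameter $\theta$ (any nonzero value suffices), so I would explicitly note that $e^{-a\theta}=e^{-b\theta}$ with $a,b\in\mathbb{Z}_{\ge 0}$ and $\theta>0$ forces $a=b$, which is what makes the selection rule tight.
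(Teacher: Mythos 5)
Your proof is correct and follows essentially the same route as the paper: both evaluate the commutation identity on the basis matrix units $|x\rangle\langle y|$ and reduce it to the scalar condition $\bigl(e^{-h(i,j)\theta}-e^{-h(x,y)\theta}\bigr)\langle i|\Tilde{\cptp}(|x\rangle\langle y|)|j\rangle=0$, from which both directions follow. Your added remarks (that only linearity is used, and that a single nonzero $\theta$ suffices to force the selection rule) are accurate refinements of the paper's terser argument rather than a different approach.
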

\begin{proof}
 From Eq.~(\ref{eq:pd}), Eq.~(\ref{eq:commu_pd}) can be rewritten as
	\begin{equation}
		\label{HDP_P}
		(e^{-h(i,j)\theta}-e^{-h(x,y)\theta})\langle i|\Tilde{\cptp}(|x\rangle\langle y|)|j\rangle=0\quad\forall i,j,x,y.
	\end{equation}
	Equivalently, $\langle i|\Tilde{\cptp}(|x\rangle\langle y|)|j\rangle=0$ if $h(x,y)\neq h(i,j)$.	
\end{proof}

By setting $\Tilde{\cptp}$ in the above proposition to be a CPTP map $\cptp$, we directly have $\mathbb{F}_1=\HDP$. Therefore, we will label the first set of free operations defined in Eq.~(\ref{eq:F1}) as HDP.

The following proposition characterizes a necessary condition of state transformations under HDP.

\begin{proposition}\label{prop:2}
	For any $n$-qubit state $\rho$, it holds that
	\begin{equation}
		\label{HDP_bound}
		\lvert \langle i|\hdp(\rho)|j\rangle \rvert
  \le {\sum_{x,y}}^\prime \lvert \rho_{xy}\rvert\sqrt{p_{i|x}p_{j|y}},
	\end{equation}
	where $\hdp\in\HDP$, $p_{i|x}=\langle i|\hdp(|x\rangle\langle x|)|i\rangle$, and $\sum^\prime_{x,y}$ stands for a summation over all $x,y$ which satisfy $h(x,y)=h(i,j)$. 
\end{proposition}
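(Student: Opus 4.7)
The plan is to expand $\hdp(\rho)$ by linearity, use Proposition~\ref{prop:1} to cull the sum to the Hamming shell, apply the triangle inequality, and then bound each surviving matrix element by a Cauchy--Schwarz argument on a Kraus decomposition of $\hdp$.

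First I would write $\rho=\sum_{x,y}\rho_{xy}|x\rangle\langle y|$ and use linearity to obtain
\begin{equation}
    \langle i|\hdp(\rho)|j\rangle=\sum_{x,y}\rho_{xy}\,\langle i|\hdp(|x\rangle\langle y|)|j\rangle.
\end{equation}
By Proposition~\ref{prop:1}, the matrix element $\langle i|\hdp(|x\rangle\langle y|)|j\rangle$ vanishes unless $h(x,y)=h(i,j)$, so only the primed subset of indices contributes. Applying the triangle inequality then gives
\begin{equation}
    |\langle i|\hdp(\rho)|j\rangle|\leq {\sum_{x,y}}^\prime |\rho_{xy}|\,|\langle i|\hdp(|x\rangle\langle y|)|j\rangle|.
\end{equation}

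The remaining, and main, step is to prove the pointwise bound $|\langle i|\hdp(|x\rangle\langle y|)|j\rangle|\leq\sqrt{p_{i|x}p_{j|y}}$. For this I would fix a Kraus decomposition $\hdp(\cdot)=\sum_k K_k(\cdot)K_k^\dagger$ and write
\begin{equation}
    \langle i|\hdp(|x\rangle\langle y|)|j\rangle=\sum_k \langle i|K_k|x\rangle\,\overline{\langle j|K_k|y\rangle}.
\end{equation}
The Cauchy--Schwarz inequality applied to the sum over $k$ (viewed as an inner product of two sequences in $\ell^2$) yields
\begin{equation}
    |\langle i|\hdp(|x\rangle\langle y|)|j\rangle|\leq\sqrt{\textstyle\sum_k|\langle i|K_k|x\rangle|^2}\,\sqrt{\textstyle\sum_k|\langle j|K_k|y\rangle|^2}.
\end{equation}
Since $\sum_k|\langle i|K_k|x\rangle|^2=\langle i|\hdp(|x\rangle\langle x|)|i\rangle=p_{i|x}$ and similarly $\sum_k|\langle j|K_k|y\rangle|^2=p_{j|y}$, the desired inequality follows and, after substitution into the triangle-inequality bound, gives exactly~(\ref{HDP_bound}).

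The only mildly delicate point is confirming that the bound $|\langle i|\hdp(|x\rangle\langle y|)|j\rangle|\leq\sqrt{p_{i|x}p_{j|y}}$ is independent of the choice of Kraus decomposition, which is automatic because the left-hand side is defined intrinsically through $\hdp$ and the diagonal quantities $p_{i|x},p_{j|y}$ are likewise representation-free. Everything else is a routine application of linearity, Proposition~\ref{prop:1}, and Cauchy--Schwarz; I do not anticipate any hidden obstacle.
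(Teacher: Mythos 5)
Your proposal is correct and follows essentially the same route as the paper's proof: restrict the sum to the Hamming shell via the HDP condition, apply the triangle inequality, and bound each term $\lvert\langle i|\hdp(|x\rangle\langle y|)|j\rangle\rvert$ by $\sqrt{p_{i|x}p_{j|y}}$ using Cauchy--Schwarz on the Kraus index (your sequences $(\langle i|K_k|x\rangle)_k$ are exactly the paper's vectors $\vec{X}^{ix}$). No gaps.
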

\begin{proof}
Based on Eq.~(\ref{D_HDP}),  each element of the output density matrix is given by
	\begin{equation}\label{eq:hdp_bound1}
		 \langle i|\hdp(\rho)|j\rangle ={\sum_{x,y}}^\prime\rho_{xy}\langle i|\hdp(|x\rangle\langle y|)|j\rangle.
	\end{equation}
 Now write $\hdp$ in a Kraus decomposition form $\hdp(\cdot)=\sum_a D_a(\cdot)D_a^\dagger$ and define the vector $\Vec{X}^{xy}=(X^{xy}_1, X^{xy}_2,\dots)$ with $X^{xy}_a\equiv\langle x|D_a|y\rangle$. Then we have $\langle i|\hdp(|x\rangle\langle y|)|j\rangle=\Vec{X}^{ix}
 \cdot \Vec{X}^{jy*}$, which implies $p_{i|x}=|\Vec{X}^{ix}|^2$. It follows from the Cauchy-Schwarz inequality that
 \begin{equation}
     \left| \langle i|\hdp(|x\rangle\langle y|)|j\rangle\right|\leq   \sqrt{p_{i|x}p_{j|y}}.
 \end{equation}
 The equality sign holds for one of the two cases: (1) either $p_{i|x}$ or $p_{j|y}$ vanishes; (2) a parameter $\xi$ exists such that $\Vec{X}^{ix}=\xi \Vec{X}^{jy}$.
Therefore, 
\begin{eqnarray}
\left| \langle i|\hdp(\rho)|j\rangle\right| &\leq & {\sum_{x,y}}^\prime |\rho_{xy}| |\langle i|\hdp(|x\rangle\langle y|)|j\rangle|\nonumber\\
& \leq & {\sum_{x,y}}^\prime |\rho_{xy}|\sqrt{p_{i|x}p_{j|y}}.
\end{eqnarray}
The equality sign in the first line holds if the terms in the summation of Eq.~(\ref{eq:hdp_bound1}) have the same phase.
\end{proof}

If we focus on the diagonal elements of the output state, the condition $h(x,y)=h(i,i)=0$ leads to $x=y$, and Eq.~(\ref{eq:hdp_bound1}) reduces to
\begin{equation}
\label{diag:hdp_bound}
    \langle i|\hdp(\rho)|i\rangle = \sum_x\rho_{xx}p_{i|x}.
\end{equation}
It means that the diagonal elements of the output state are independent of the off-diagonal elements of the input state.

Moreover, an $n$-qubit state $\rho$ can be written as
\begin{equation}
    \rho=\sum_{h=0}^n \rho^{(h)},\ \rho^{(h)}\equiv\sum_{x,y:h(x,y)=h}\rho_{xy}|x\rangle\langle y|.
\end{equation}
We call $\rho^{(h)}$ a Hamming mode of $\rho$. Then, \textbf{Proposition \ref{prop:2}} indicates that each Hamming mode of the input state is independently mapped by HDP to the corresponding Hamming mode of the output state, namely,
\begin{equation}
    \hdp(\rho^{(h)})=[\hdp(\rho)]^{(h)}.
\end{equation}
It follows that, if $\rho^{(h)}=0$ for some $h$, then $[\hdp(\rho)]^{(h)}=0$. For example, the state $|\psi_1\rangle=\frac{1}{\sqrt{2}}(|0\rangle+|1\rangle)$ cannot be transformed by HDP to state $|\psi_2\rangle=\frac{1}{\sqrt{2}}(|0\rangle+|3\rangle)$, because $(|\psi_1\rangle\langle\psi_1|)^{(2)}=0$ but $(|\psi_2\rangle\langle\psi_2|)^{(2)}=\frac{1}{2}(|0\rangle\langle 3|+|3\rangle\langle 0|)\neq 0$. This sets the essential difference between the present resource theory and the resource theory of coherence, because $|\psi_1\rangle$ and $|\psi_2\rangle$ can be transformed to each other by incoherent unitary operators. More details of the comparison between the two resource theories are to be illustrated in Sec. \ref{COMPARISON}.

An immediate problem is then whether the bound in Eq. (\ref{HDP_bound}) can be reached. We will prove the attainability in single-qubit cases, and show that it cannot be reached for general $n$-qubit input states. Further, we will characterize a set of $n$-qubit input states, for which the bound in Eq. (\ref{HDP_bound}) can be reached.

The set of states, which can be obtained from a given state $\rho$ by HDP, is called the HDP cone of $\rho$. By \textbf{Propostion \ref{prop:2}}, the HDP cone can be obtained for any single-qubit state. 

\begin{corollary}
    For two single-qubit states with Bloch vectors $\vec r=(r\cos{\phi},r\sin{\phi},z)$ and $\vec{r}'=(r'\cos{\phi'},r'\sin{\phi'},z')$, $\vec r$ can be transformed to $\vec{r}'$ via HDP if and only if
    \begin{equation}
		\label{SIO_region}
		r'\leq\left\{
		\begin{aligned}
			& r\sqrt{\frac{1-z^{\prime2}}{1-z^2}}, & |z'|\geq|z|\\
			& r,  & |z'|< |z| .
		\end{aligned}
		\right.
	\end{equation}
 \end{corollary}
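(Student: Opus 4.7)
The plan is to establish both directions of the equivalence by combining Proposition 2 with an explicit Choi-matrix construction. First, I would specialize the HDP condition to a single qubit: because $h(i,j)=h(0,0)=h(1,1)=0$ forces $i=j$, the diagonal basis inputs $|0\rangle\langle 0|$ and $|1\rangle\langle 1|$ must map to diagonal outputs, which I parametrize as $\hdp(|0\rangle\langle 0|)=p|0\rangle\langle 0|+(1-p)|1\rangle\langle 1|$ and $\hdp(|1\rangle\langle 1|)=q|0\rangle\langle 0|+(1-q)|1\rangle\langle 1|$ with $p,q\in[0,1]$. Reading the diagonal of $\hdp(\rho)$ then gives the affine constraint $z'=(1+z)p+(1-z)q-1$, while applying the bound of Proposition 2 to the single off-diagonal entry (the summation collapses to the two pairs $(x,y)\in\{(0,1),(1,0)\}$) yields $r'\le r\bigl(\sqrt{p(1-q)}+\sqrt{q(1-p)}\bigr)$.

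Next I would optimize this upper bound over all $(p,q)\in[0,1]^2$ satisfying the diagonal constraint. The substitution $p=\sin^2\alpha$, $q=\sin^2\beta$ with $\alpha,\beta\in[0,\pi/2]$ collapses the right-hand side to $r\sin(\alpha+\beta)$, and rewriting the constraint in the variables $\gamma=\alpha+\beta$, $\delta=\alpha-\beta$ brings it to the form $\cos\gamma\cos\delta - z\sin\gamma\sin\delta = -z'$. Viewed as a linear equation in $(\cos\delta,\sin\delta)$, this admits a real solution if and only if $\sin^2\gamma\le(1-z'^2)/(1-z^2)$, so the maximum of $\sin(\alpha+\beta)$ equals $\sqrt{(1-z'^2)/(1-z^2)}$ when $|z'|\ge|z|$ and equals $1$ when $|z'|<|z|$. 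This gives exactly the two branches of the necessary condition stated in the corollary.

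For the converse, I would exhibit an HDP saturating the bound for any target $(r',\phi',z')$ in the allowed region. The single-qubit HDP Choi matrix decomposes into two independent $2\times 2$ blocks whose off-diagonal entries are the coherence couplings $c_1=\langle 0|\hdp(|0\rangle\langle 1|)|1\rangle$ and $c_2=\langle 1|\hdp(|0\rangle\langle 1|)|0\rangle$; complete positivity is equivalent to $|c_1|^2\le p(1-q)$ and $|c_2|^2\le q(1-p)$, while trace preservation is automatic. Taking the $(p,q)$ identified in the optimization step, selecting $|c_1|,|c_2|$ at their maximal values (or rescaling by a common factor below $1$ when $r'$ is strictly below the bound), and choosing the phases of $c_1,c_2$ so that the two contributions in $\langle 0|\hdp(\rho)|1\rangle=\rho_{01}c_1+\rho_{10}\overline{c_2}$ add in phase with $e^{-i\phi'}$ delivers exactly $r'e^{-i\phi'}$ as the off-diagonal of the output.

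The principal obstacle is the constrained two-variable maximization of $\sqrt{p(1-q)}+\sqrt{q(1-p)}$: the trigonometric substitution is what turns it into a transparent one-variable trigonometric identity, and identifying this substitution (together with the bookkeeping needed to confirm that the optimal $(\alpha,\beta)$ lies in $[0,\pi/2]^2$ for every $(z,z')\in[-1,1]^2$) is the main technical step. Once this is in place, both the necessity and the Choi-matrix construction for sufficiency are essentially direct.
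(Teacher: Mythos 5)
Your proposal is correct and follows essentially the same route as the paper: necessity via the Proposition~2 off-diagonal bound combined with the diagonal (population) constraint, a trigonometric substitution turning $\sqrt{p(1-q)}+\sqrt{q(1-p)}$ into a single sine/cosine, and a Cauchy--Schwarz-type bound yielding $\sin^2(\alpha+\beta)\le(1-z'^2)/(1-z^2)$; sufficiency by an explicit saturating channel. The only cosmetic difference is that you package the construction as a block-diagonal Choi matrix with maximal off-diagonal couplings, whereas the paper writes out the equivalent two Kraus operators directly.
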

\begin{proof}
If $\vec{r}'$ can be obtained from $\vec r$ via HDP, Eq.~(\ref{HDP_bound}) gives
    \begin{eqnarray}
    \label{r_bound}
      r'&\leq         &r(\sqrt{p_{0|0}p_{1|1}}+\sqrt{p_{0|1}p_{1|0}})\nonumber\\
     &=&r\cos(\theta_0-\theta_1),\label{eq:bound_qubit}
    \end{eqnarray}
    where we set $p_{0|0}=\cos^2\theta_0$, $p_{1|1}=\cos^2\theta_1$, and $\theta_0,\theta_1\in[0,\frac{\pi}{2}]$.
Furthermore, from Eq.~(\ref{diag:hdp_bound}), we have
     \begin{equation}\label{eq:zp}
           \dfrac{1+z'}{2}=\cos^2\theta_0\dfrac{1+z}{2}+(1-\cos^2\theta_1)\dfrac{1-z}{2}.
    \end{equation} 
It follows that
    \begin{equation}\label{eq:bound_z1}
    \begin{split}
        z'^2&=[\dfrac{1}{2}\cos2\theta_0(1+z)-\dfrac{1}{2}\cos2\theta_1(1-z)]^2\\
        &=\dfrac{1}{4}(\Re[e^{2i\theta_0}(1+z)-e^{2i\theta_1}(1-z)])^2\\
        &\le\dfrac{1}{4}|e^{2i\theta_0}(1+z)-e^{2i\theta_1}(1-z)|^2\\
        &=1-(1-z^2)\cos^2(\theta_0-\theta_1),
    \end{split}
    \end{equation} 
or equivalently,
\begin{equation}\label{eq:bound_z}
    \cos^2(\theta_0-\theta_1)\leq\min\{1,\frac{1-z^{\prime2}}{1-z^2}\}.
\end{equation}
By combining Eqs. (\ref{r_bound}) and (\ref{eq:bound_z}), we conclude that Eq. (\ref{SIO_region}) is a necessary condition for the transformation from $\vec r$ to $\vec{r}'$.

The sufficiency of Eq. (\ref{SIO_region}) is proved by constructing a channel $\hdp_*\in\HDP$ such that the boundary states satisfying the equality sign in Eq. (\ref{SIO_region}) can be reached. Precisely, we have $\hdp_*(\cdot)=K_1(\cdot) K_1^\dagger+K_2(\cdot) K_2^\dagger$ with
    \begin{eqnarray}\label{eq:Extreme_Kraus_qubit}
        K_1=\left(\begin{array}{cc}
        \cos\theta_0 & 0 \\
        0 & e^{i(\phi-\phi')}\cos\theta_1
     \end{array}\right),\nonumber\\
     K_2=\left(\begin{array}{cc}
        0 & e^{i(\phi-\phi')}\sin\theta_1 \\
        \sin\theta_0 & 0
        \end{array}\right),
        \end{eqnarray}
where $\theta_0,\theta_1\in[0,\frac{\pi}{2}]$ are determined by the parameters in the input and output states as follows. 

If $|z'|< |z|$, we choose $\sin{2\theta_0}=\sin{2\theta_1}=\frac{z'}{z}$. It can be checked that the channel $\hdp_*$ transforms the state $\vec r$ to $\vec{r}'$ with $r'=r$ and $z'=z\sin{2\theta_0}$.

If $|z'|\geq|z|$, the equality sign in Eq.(\ref{eq:bound_z1}) holds when
\begin{equation}
\sin 2\theta_0(1+z)-\sin 2\theta_1(1-z)=0.
\end{equation}
Combining this equation with Eq. (\ref{eq:zp}), we obtain
\begin{equation}
    \cos2\theta_0=\frac{z+z^{\prime2}}{z'(1+z)},\cos2\theta_1=\frac{z-z^{\prime2}}{z'(1-z)}.
\end{equation}
The channel $\hdp_*$ with the above parameters gives an output state with $z'=\sqrt{1-(1-z^2)\cos^2(\theta_0-\theta_1)}$ and $r'=r\cos(\theta_0-\theta_1)$.

\end{proof}
\begin{figure}
	\centering
	\includegraphics[width=0.45\textwidth]{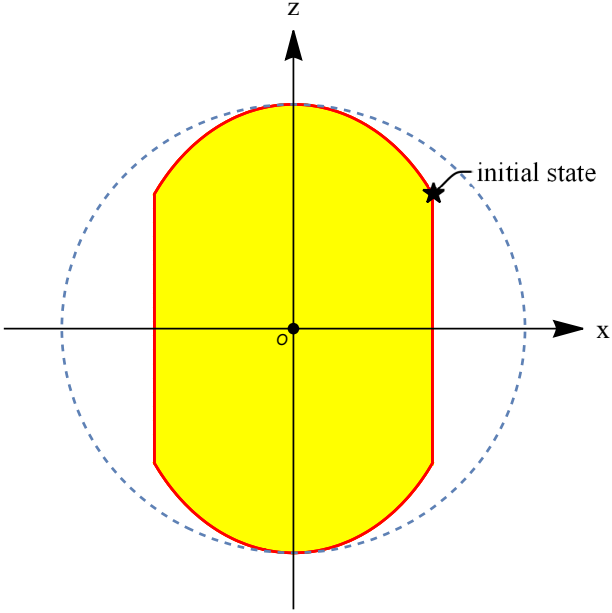}
	\caption{The colored region is the projection of the SHP cone in the x-z plane, when the initial Bloch vector is $(0.6,0,0.6)$. The initial state is marked by the star. The area bounded by the dotted line contains all single-qubit states.}
	\label{f_1}
\end{figure}

The unitary operator $U(\phi)=\textrm{diag}(1,e^{-i\phi})$ belongs to HDP. Based on these unitary operations, the HDP cone is rotationally symmetric about the $z$ axis. That is, the Bloch vector $(r',0,z')$ can be transformed to Bloch vector $(r'\cos{\phi'},r'\sin{\phi'},z')$. 
In FIG.$\ref{f_1}$, we plot the projection of the SHP cone in the x-z plane for a given initial state, whose Bloch vector is  $(0.6,0,0.6)$.

In the above proof, we show that the bound in Eq. (\ref{HDP_bound}) can be reached for single-qubit input states. Nevertheless, we will show in the following that this bound cannot be reached in general. A counterexample goes as follows. Consider a $2$-qubit state in the following form
 \begin{equation}
     \rho=\frac14\left(\begin{array}{cccc}
         1 & \frac{1}{\sqrt{2}} & \frac{1}{\sqrt{2}} & 0 \\
        \frac{1}{\sqrt{2}} & 1 & 0 & 0 \\
        \frac{1}{\sqrt{2}} & 0 & 1 & 0 \\
        0 & 0 & 0 & 1
     \end{array}\right).
 \end{equation}
  We aim to maximise $|\langle0|\cptp(\rho)|1\rangle|$ for $\cptp\in\HDP$. The bound in Eq. (\ref{HDP_bound}) gives
 \begin{eqnarray}
     |\langle0|\cptp(\rho)|1\rangle|& \leq &\sqrt{p_{0|0}p_{1|1}}|\rho_{01}| + \sqrt{p_{0|1}p_{1|0}}|\rho_{10}|\nonumber\\
     & & +\sqrt{p_{0|0}p_{1|2}}|\rho_{02}| + \sqrt{p_{0|2}p_{1|0}}|\rho_{20}|\nonumber\\
     &\leq & |\rho_{01}|+|\rho_{02}|.\label{eq:bound1}
 \end{eqnarray}
 The second equality sign holds if and only if $p_{0|0}=p_{1|1}=p_{1|2}\equiv \cos^2\phi$ and $p_{1|0}=p_{0|1}=p_{0|2}\equiv \sin^2\phi$.

 Now suppose $\cos^2\phi\neq0$. It follows that the first equality sign in Eq.~(\ref{eq:bound1}) holds only if $\Vec{X}^{00}=e^{i\eta_1}\Vec{X}^{11}=e^{i\eta_2}\Vec{X}^{12}$, which in turn gives $\cos^2\phi=|\Vec{X}^{11}\cdot\Vec{X}^{12*}|=|\langle 1|\cptp(|1\rangle\langle 2|)|1\rangle|$. However, the rhs of the above equation equals zero from the Hamming distance preserving condition. This violates the assumption that $\cos^2\phi\neq0$. Therefore, we have $\sin^2\phi=1$. Then the first equality sign in Eq. (\ref{eq:bound1}) holds only if $\Vec{X}^{10}=e^{i\xi_1}\Vec{X}^{01}=e^{i\xi_2}\Vec{X}^{02}$, which in turn gives $\sin^2\phi=|\Vec{X}^{01}\cdot\Vec{X}^{02*}|=\langle 0|\cptp(|1\rangle\langle 2|)|0\rangle$, but again the rhs equals zero because of the Hamming distance preserving condition. Therefore, it is impossible to transform $\rho$ into 
 \begin{equation}
     \rho'=\frac14\left(\begin{array}{cccc}
         1 & \sqrt2 & 0 & 0 \\
        \sqrt2 & 2 & 0 & 0 \\
        0 & 0 & 0 & 0 \\
        0 & 0 & 0 & 1
     \end{array}\right),
 \end{equation}
 while this transformation is allowed by the condition as Eq. (\ref{HDP_bound}).

 However, for $n$-qubit states satisfying the following two conditions, we construct a quantum operation $\cptp\in \HDP$ such that the equality sign in Eq. (\ref{HDP_bound}) is reached. The two conditions can be stated as follows.\\
 (C1) $\exists c\in \mathbb Z^+$, if $h(x,y)\neq c$, then $\rho_{xy}=0$.\\
 (C2) $\forall\rho_{xy},\rho_{x'y'}\neq 0$, where $x\neq x'$, $\mathrm{span}\{|x\rangle,|y\rangle\}$ is orthogonal to $\mathrm{span}\{|x'\rangle,|y'\rangle\}$.\\
 Condition (C1) means that the Hamming distance between the supports of each non-zero off-diagonal element equals to a constant. Condition (C2) means that the supports of any two  off-diagonal elements do not have overlap. Therefore, for any non-zero off-diagonal element $\rho_{xy}$, $y$ uniquely depends on $x$, and hence we write $y$ as $y(x)$.

 Assume the above two conditions are satisfied by the input state. Let $i,j$ be $n$-bit strings satisfying $h(i,j)=h(x,y(x))$. We construct $\cptp\in \HDP$ as $\cptp(\cdot)=K_0(\cdot)K_0^\dagger+\sum_x K_{x,1}(\cdot)K_{x,1}^\dagger+K_{x,2}(\cdot)K_{x,2}^\dagger$ with
 \begin{eqnarray}
 \label{eq:Kraus_merge}
     K_{x,1} &= &\sqrt{p_{i|x}}e^{-i\phi_x}|i\rangle\langle x|+\sqrt{p_{j|y(x)}}|j\rangle\langle y(x)|,\nonumber\\
      K_{x,2} &= &\sqrt{p_{j|x}}e^{-i\phi_x}|j\rangle\langle x|+\sqrt{p_{i|y(x)}}|i\rangle\langle y(x)|,\nonumber\\
      K_0&=&[\iden-\sum_x(K_{x,1}^\dagger K_{x,1}+K_{x,2}^\dagger K_{x,2})]^{\frac12},
 \end{eqnarray}
 where $x$ is the $n$-bit string such that $\rho_{xy}\neq 0$ and $x<y(x)$, and $\phi_x$ is the phase of $\rho_{xy}$.
It follows that
\begin{equation}
    \langle i|\cptp(\rho)|j\rangle=\sum_x (\sqrt{p_{i|x}p_{j|y(x)}}+\sqrt{p_{j|x}p_{i|y(x)}})|\rho_{xy}|,
\end{equation}
which means that the bound in Eq. (\ref{HDP_bound}) is reached by the channel $\cptp$.
Furthermore, if one aims to maximize $\langle i|\cptp(\rho)|j\rangle$ over all $\HDP$, we choose $p_{i|x}=p_{j|y(x)}$ and $p_{j|x}=p_{i|y(x)}=1-p_{i|x}$, and obtain
\begin{equation}
    \langle i|\cptp(\rho)|j\rangle_{\max}=\sum_x |\rho_{xy}|.
\end{equation}
It means that, if conditions (C1) and (C2) are satisfied by input state, then the off-diagonal elements of the same Hamming mode can be merged together by HDP.


	
\subsection{SELECTIVELY HAMMING DISTANCE PRESERVING OPERATIONS (SHP)}\label{subsection:SHP}
We give the definition of selectively Hamming distance preserving opertions (SHP) as follows.

\begin{definition}
	\label{SHP_D1}
 	Let $\shp:\mathcal{L}(\mathcal{H}^n)\mapsto\mathcal{L}(\mathcal{H}^n)$ be a CPTP map. Then $\shp$ belongs to selectively Hamming distance preserving operations (SHP) if it has a Kraus decomposition $\shp(\cdot)=\sum_l K_l(\cdot) K_l^\dagger$ such that
	\begin{equation}
		\label{SHP}
		\langle i|K_l |x\rangle\langle y| K_l^\dagger |j\rangle=0,\ \forall l,
	\end{equation}
 if $h(i,j)\neq h(x,y)$.
\end{definition}
By \textbf{Proposition \ref{prop:1}}, SHP is equivalent to the second set of free operations $\mathbb{F}_2$ defined in Eq.~(\ref{eq:F2}). Hence, we denote $\mathbb{F}_2$ as SHP here and after.

The following proposition gives the explicit form of the Kraus operators of a selectively Hamming distance preserving operation.
\begin{proposition}
	\label{SHP_STRUCTURE}
 If $\shp\in\SHP$, then $\shp$ has a Kraus decomposition $\shp(\cdot)=\sum_l K_l(\cdot)K_l^\dagger$ such that every Kraus operator can be written as
	\begin{equation}
		\label{FK}
		K_{l}=\sum_{x=0}^{2^n-1}c_{lx}|\pi_{l}^{n}(x)\rangle\langle x|,
	\end{equation}
	where  $\sum_l|c_{lx}|^2=1$,
 and $\pi_{l}^{n}:\NS\mapsto\NS$ is a one-to-one map such that 
	\begin{equation}
		\label{MAP}
		h(x,y)=h(\pi_{l}^{n}(x),\pi_{l}^{n}(y)),\forall x,y.
	\end{equation}
	We define $\{\pi_j^n\}$ as Hamming distance preserving functions and discuss them in \textbf{Appendix \ref{sec:appendix-a}}.
\end{proposition}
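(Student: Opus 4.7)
The plan is to read off the weighted-permutation structure of each $K_l$ in the given SHP Kraus decomposition directly from Eq.~(\ref{SHP}). First, I would specialize to $x = y$: since $h(x,x) = 0$, the condition forces $\langle i|K_l|x\rangle\overline{\langle j|K_l|x\rangle} = 0$ for every $i\neq j$, so $K_l|x\rangle$ has at most one nonzero coordinate and can be written as $K_l|x\rangle = c_{lx}|\sigma_l(x)\rangle$ for a partial function $\sigma_l$ defined on the support $\{x : c_{lx}\neq 0\}$.

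Next, two further specializations of Eq.~(\ref{SHP}) pin down the properties of $\sigma_l$. Setting $i = j$ with $x \neq y$ (so that $h(i,j) = 0 \neq h(x,y)$) shows that no two distinct columns of $K_l$ can share a nonzero row, so $\sigma_l$ is injective on its domain. Then applying Eq.~(\ref{SHP}) with $i = \sigma_l(x)$ and $j = \sigma_l(y)$ for $x, y$ both in the support gives $h(\sigma_l(x),\sigma_l(y)) = h(x,y)$, establishing that $\sigma_l$ is a partial Hamming isometry.

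The remaining step is to extend $\sigma_l$ to a global one-to-one map $\pi_l^n:\NS \to \NS$ satisfying $h(\pi_l^n(x),\pi_l^n(y)) = h(x,y)$ for all $x, y\in\NS$, which I expect to be the main obstacle since a priori a partial metric embedding need not extend. This extension would invoke the structural characterization of Hamming isometries developed in Appendix~\ref{sec:appendix-a}, namely that the global Hamming isometries of $\NS$ are precisely compositions of coordinate permutations and bit flips (the hyperoctahedral group), together with the combinatorial statement that any injective partial Hamming isometry on a subset of $\NS$ can be consistently completed to such a global map. Since $c_{lx} = 0$ outside the support, the precise choice of extension leaves $K_l$ unchanged, yielding $K_l = \sum_{x} c_{lx}|\pi_l^n(x)\rangle\langle x|$. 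Finally, the normalization $\sum_l |c_{lx}|^2 = 1$ follows from the trace-preserving relation $\sum_l K_l^\dagger K_l = \iden$ by sandwiching with $\langle x|\cdot|x\rangle$.
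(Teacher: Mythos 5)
Your first three steps are correct and coincide with the paper's own argument: specializing Eq.~(\ref{SHP}) to $x=y$ and to $i=j$ shows that every column and every row of $K_l$ contains at most one nonzero entry, so $K_l$ is a weighted partial permutation $K_l|x\rangle=c_{lx}|\sigma_l(x)\rangle$ with $\sigma_l$ injective on the support $S_l=\{x:c_{lx}\neq0\}$, and evaluating Eq.~(\ref{SHP}) at $i=\sigma_l(x)$, $j=\sigma_l(y)$ gives $h(\sigma_l(x),\sigma_l(y))=h(x,y)$ for $x,y\in S_l$. The normalization from $\sum_lK_l^\dagger K_l=\iden$ is also right.

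The extension step that you flag as the main obstacle is a genuine gap, and the lemma you propose to invoke is false: an injective partial Hamming isometry on a subset of $\NS$ need not extend to a global one. For $n=4$ let $S=\{0000,1100,1010,1001\}$ and let $\sigma$ fix the first three strings and send $1001\mapsto0110$; every pairwise distance within $S$ and within $\sigma(S)$ equals $2$, so $\sigma$ is a partial Hamming isometry, yet by the characterization in Appendix~\ref{sec:appendix-a} a global Hamming isometry fixing $0000$ is a permutation of bit positions, and one that additionally fixes $1100$ and $1010$ must be the identity, which does not send $1001$ to $0110$. Nor is this a removable artifact of a bad choice of Kraus decomposition: $K_1=\tfrac{1}{\sqrt2}(|0000\rangle\langle0000|+|1100\rangle\langle1100|+|1010\rangle\langle1010|+|0110\rangle\langle1001|)$ together with $K_2=\tfrac{1}{\sqrt2}\Pi_S+(\iden-\Pi_S)$, where $\Pi_S=\sum_{x\in S}|x\rangle\langle x|$, defines a channel in $\SHP$, and by the unitary freedom of Kraus representations any decomposition of it into weighted partial permutations must contain a scalar multiple of $K_1$ (inspect the column indexed by $1001$). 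The paper's own proof has the same hole: it derives the distance-preserving property only for pairs with $c_{lx},c_{ly}\neq0$ and then silently treats $f_l$ as a globally defined Hamming distance preserving function. What both arguments actually establish is the weaker statement in which Eq.~(\ref{MAP}) is required only for $x,y$ in the support of $c_{l\cdot}$; that weaker form suffices for the later applications (in particular for the free unitaries of Corollary~\ref{p1}, where unitarity forces every $c_{lx}$ to be nonzero, so the global form is genuinely recovered there).
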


\begin{proof}
For $x=y$, Eq.~(\ref{SHP}) reduces to 
\begin{equation}
    \langle i|K_l|x\rangle\langle x|K_l^\dagger|j\rangle=\delta_{ij}|\langle i|K_l|x\rangle|^2.
\end{equation}
Similarly, we have
\begin{equation}
   \langle i|K_l|x\rangle\langle y|K_l^\dagger|i\rangle=\delta_{xy}|\langle i|K_l|x\rangle|^2.
\end{equation}
Hence, there is at most one non-zero element in each row and each column of $K_l$. The Kraus operators can then be written as 
	\begin{equation}
		K_l=\sum_{x=0}^{2^n-1}c_{lx}|f_l(x)\rangle\langle x|,
	\end{equation}
	where $f_l:\NS\mapsto\NS$ is a one-to-one map and the completion identity $\sum_lK_l^\dagger K_l=\iden$ demands $\sum_l|c_{lk}|^2=1$.

 In order to prove that $f_l(x)$ is a Hamming distance preserving function, we substitute the above form of $K_l$ in Eq.~(\ref{SHP}). 
It follows that the necessary condition for 
	\begin{equation}
		\langle i|f_l(x)\rangle \langle f_l(y)|j\rangle c_{lx}c_{ly}^*\neq0
	\end{equation}
	is that $f_l(x)=i,\ f_l(y)=j$, and $h(x,y)=h(i,j)$. In other words, for any two coefficients $c_{lx},c_{ly}\neq0$, it holds that
	\begin{equation}
		h(x,y)=h(f_l(x),f_l(y)),
	\end{equation}
or equivalently, $f_l(x)$ is a Hamming distance preserving function.
	\end{proof}

	For a resource theory, the states, which can be transformed to any state of the same dimension under free operations, are called golden states. To explore the existence of golden states is one of the most notable problems for a resource theory. Here we will prove that the uniform superposition states, denoted as
 \begin{equation}
     |+^{(n)}\rangle=\frac{1}{\sqrt{2^n}}\sum_{x=0}^{2^n-1}e^{-i\eta_x}|x\rangle,\ \eta_x\in[0,2\pi),
 \end{equation}
  are the golden states in the present resource theory.
\begin{proposition}
	\label{proposition_TS_SHP}
	An $n$-qubit uniform superposition state can be transformed to any $n$-qubit state by SHP.
\end{proposition}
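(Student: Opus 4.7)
The plan is to establish the proposition in two stages: first construct an explicit SHP operation that sends $|+^{(n)}\rangle$ to an arbitrary pure target state, and then extend to mixed target states by convexity.

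For the pure-state stage, I would look for Kraus operators whose Hamming-distance-preserving functions are simple bit flips $\pi_y(x)=x\oplus y$, indexed by strings $y\in\NS$; these are manifestly Hamming-distance preserving since $h(x\oplus y,x'\oplus y)=h(x,x')$, so any CPTP map built from Kraus operators of the form prescribed by \textbf{Proposition \ref{SHP_STRUCTURE}} with these $\pi_y$ lies in SHP. Given a target pure state $|\psi\rangle=\sum_{x}\alpha_x|x\rangle$, my candidate is
\begin{equation}
    K_y=\sum_{x=0}^{2^n-1}\alpha_{x\oplus y}\,e^{i\eta_x}\,|x\oplus y\rangle\langle x|,\qquad y\in\NS.
\end{equation}
I would then verify three things: (i) each $K_y$ has the SHP form of Eq.~(\ref{FK}) with $\pi_y(x)=x\oplus y$ and coefficients $c_{y,x}=\alpha_{x\oplus y}e^{i\eta_x}$; (ii) the completion identity $\sum_y K_y^\dagger K_y=\iden$ holds, which reduces to checking $\sum_y|c_{y,x}|^2=\sum_y|\alpha_{x\oplus y}|^2=\sum_z|\alpha_z|^2=1$ for every $x$; (iii) acting on $|+^{(n)}\rangle$ cancels the phases $e^{-i\eta_x}$, yielding $K_y|+^{(n)}\rangle=2^{-n/2}|\psi\rangle$ uniformly in $y$, so that the full channel output is $\sum_y K_y|+^{(n)}\rangle\langle+^{(n)}|K_y^\dagger=|\psi\rangle\langle\psi|$.

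For the mixed-state stage, I would decompose an arbitrary target as a pure-state ensemble $\rho=\sum_i q_i|\psi_i\rangle\langle\psi_i|$, apply the pure-state construction to get SHP channels $\shp_i$ with $\shp_i(|+^{(n)}\rangle\langle+^{(n)}|)=|\psi_i\rangle\langle\psi_i|$, and form the convex combination $\shp=\sum_i q_i\shp_i$. I would observe that SHP is closed under convex combinations: if $\{K^{(i)}_y\}$ are SHP Kraus operators for $\shp_i$, then $\{\sqrt{q_i}\,K^{(i)}_y\}$ are Kraus operators for $\shp$ and each still has the form of Eq.~(\ref{FK}) with the same Hamming-distance-preserving $\pi_y$, so $\shp\in\SHP$ and $\shp(|+^{(n)}\rangle\langle+^{(n)}|)=\rho$.

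The only genuinely creative step is guessing the right Kraus ansatz; once the bit-flip permutations are chosen, the verification is routine algebra and the obstacle dissolves because the uniform magnitudes of $|+^{(n)}\rangle$ exactly match the uniform probability weights $|\mu_y|^2=2^{-n}$ that a bit-flip indexed decomposition requires. The closure of SHP under convex combinations is the ingredient that makes the mixed-state extension automatic, so no further construction is needed beyond the pure-state case.
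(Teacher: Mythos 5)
Your proof is correct and follows essentially the same route as the paper: reduce to pure targets via convexity of SHP, then build the channel from Kraus operators indexed by bitwise XOR permutations $x\mapsto x\oplus y$ so that each branch maps $|+^{(n)}\rangle$ to $2^{-n/2}|\psi\rangle$ (your $K_y$ is the paper's $K_z$ after the substitution $x\to x\oplus y$). The only cosmetic issue is the undefined symbol $\mu_y$ in your closing remark, which evidently denotes the amplitudes of $|+^{(n)}\rangle$.
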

\begin{proof}
 By definition, SHP is convex. Hence it is sufficient to prove that any $n$-qubit pure state can be obtained from $|+^{(n)}\rangle$ by SHP.
 The general form of an $n$-qubit pure state reads $|\psi^{(n)}\rangle=\sum_{x=0}^{2^n-1}e^{-i\phi_x}\sqrt{\psi_x}|x\rangle$, where $\phi_x\in[0,2\pi)$ and $\psi_x\geq0$ satisfying $\sum_x \psi_x=1$. 
 
 Here we design a selectively Hamming distance preserving operation $\mathcal{S}(\cdot)=\sum_{z=0}^{2^n-1}K_z(\cdot) K_z^\dagger$, which transforms $|+^{(n)}\rangle$ to $|\psi^{(n)}\rangle$. Each Kraus operator is in the following form
	\begin{equation}
		\label{STATE_TRANSFORMATION}
		K_z=\sum_{x=0}^{2^n-1}e^{-i(\phi_x-\eta_{q(z,x)})}\sqrt{\psi_x}|x\rangle\langle q(z,x)|,
	\end{equation}
	where  $q(z,x)$ is an $n$-bit string with the $m$th bit defined as
	\begin{equation}
		(q(z,x))_m=z_m\oplus x_m.
	\end{equation}
 Therefore, $\{|q(z,x)\rangle\}_z$ is a basis of $n$-qubit Hilbert space for any $x$, or equivalently, $\sum_{z=0}^{2^n-1}|q(z,x)\rangle\langle q(z,x)|=\iden$. It is directly checked that  $\sum_{z}K_z^\dagger K_z=\iden$, which ensures that $\mathcal{S}$ is a CPTP map. Besides, $h(x,x')=h(q(z,x),q(z,x'))$, which ensures that $\mathcal{S}\in\SHP$.
	
	Then we calculate 
	\begin{widetext}
		\begin{equation}
			\begin{split}
				K_z|+^{(n)}\rangle&=\dfrac{1}{\sqrt{2^n}}\sum_{x=0}^{2^n-1}e^{-i(\phi_x-\eta_{q(z,x)})}\sqrt{\psi_x}|x\rangle\langle q(z,x)|\sum_{y=0}^{2^n-1}e^{-i\eta_y}|y\rangle\\
				&=\dfrac{1}{\sqrt{2^n}}\sum_{x=0}^{2^n-1}e^{-i\phi_x}\sqrt{\psi_x}|x\rangle\\
				&=\dfrac{1}{\sqrt{2^n}}|\psi^{(n)}\rangle.			
			\end{split}
		\end{equation}
	\end{widetext}
 Consequently, $\mathcal{S}(|+^{(n)}\rangle\langle +^{(n)}|)=|\psi^{(n)}\rangle\langle\psi^{(n)}|$.
\end{proof}

The set of golden states contains $n$-qubit product states. For such states, the resource of entanglement does not provide any advantage in dephasing estimation. In fact, $\PDqfi(|+^{(n)}\rangle\langle +^{(n)}|)= n\PDqfi(|+^{(1)}\rangle\langle+^{(1)}|)$. This is consistent with the results in \cite{Kolodynski_2013, PhysRevLett.118.100502, Pirandola2018}, where it is found that dephasing estimation cannot beat the standard quantum limit.

It is interesting to notice that the set of golden states also contains maximally entangled states, e.g., 
the two-qubit state $\frac12(|00\rangle+|01\rangle+|10\rangle-|11\rangle)$. For such states, the local coherence of each qubit vanishes but the entanglement reaches maximum.
	
	
	Next, we introduce the concept of Hamming distance preserving unitary operations, which are free unitary operations in the resource theory of dephasing estimation.
	\begin{definition}
		Let $\mathcal{U}(\cdot)=U_j(\cdot) U_j^{\dagger}$ be a unitary operation, where $U_j$ is a unitary operator. Then $\mathcal{U}$ is said to be a Hamming distance preserving unitary operation if and only if
		\begin{equation}
			\label{FU}
			\pd(U_j\rho {U_j}^{\dagger})=U_j\pd(\rho){U_j}^{\dagger},\ \forall \rho.
		\end{equation}
	\end{definition}
 Because a Hamming distance preserving unitary operation belongs to $\SHP$, from \textbf{Proposition \ref{SHP_STRUCTURE}}, we arrive at following structure of these unitary operators.
	\begin{corollary}   
	\label{p1}
		Any $n$-qubit unitary operator satisfies Eq.~(\ref{FU}) if and only if it can be written as the form
	\begin{equation}
		U_j=\sum_{x=0}^{2^n-1}e^{-i\omega_x}|\pi_j^n(x)\rangle \langle x| \quad \forall j, \rho,
	\end{equation}
	where $\pi_j^n(\cdot)$ is a Hamming distance preserving function.
	\end{corollary}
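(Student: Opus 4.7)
The plan is a direct specialization of \textbf{Proposition \ref{SHP_STRUCTURE}} to the case of unitary operations, handling both implications in turn.

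For the ``if'' direction, I assume $U_j=\sum_x e^{-i\omega_x}|\pi_j^n(x)\rangle\langle x|$ and verify Eq.~(\ref{FU}) by direct calculation on an arbitrary basis dyad $|x\rangle\langle y|$. Applying the unitary gives $U_j|x\rangle\langle y|U_j^\dagger = e^{-i(\omega_x-\omega_y)}|\pi_j^n(x)\rangle\langle \pi_j^n(y)|$. From Eq.~(\ref{eq:pd}), composing with $\pd$ on either side produces a dephasing factor of $e^{-h(\pi_j^n(x),\pi_j^n(y))\theta}$ or $e^{-h(x,y)\theta}$ respectively, and the two compositions agree because the Hamming distance preserving property of $\pi_j^n$, Eq.~(\ref{MAP}), identifies these factors. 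Linearity then promotes the equality to arbitrary input states.

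For the ``only if'' direction, suppose $\mathcal{U}_j(\cdot) = U_j(\cdot)U_j^\dagger$ commutes with $\pd$ as in Eq.~(\ref{FU}). Since $\mathcal{U}_j$ has a single-operator Kraus representation, the map-level commutativity is automatically a per-Kraus condition, so $\mathcal{U}_j\in\SHP$ by the definition via Eq.~(\ref{SHP}). Invoking \textbf{Proposition \ref{SHP_STRUCTURE}}, there is a Kraus decomposition $\mathcal{U}_j(\cdot)=\sum_l K_l(\cdot)K_l^\dagger$ with every $K_l$ of the structured form in Eq.~(\ref{FK}). Since $\mathcal{U}_j$ has Choi rank one, the unitary freedom of Kraus decompositions forces $K_l=\alpha_l U_j$ with $\sum_l|\alpha_l|^2=1$. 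Picking any $l$ with $\alpha_l\neq 0$ and absorbing $\alpha_l$ into the coefficients yields $U_j = \sum_x c_x|\pi_j^n(x)\rangle\langle x|$ with $\pi_j^n$ Hamming distance preserving. Unitarity of $U_j$ together with the bijectivity of $\pi_j^n$ forces $|c_x|=1$, so $c_x=e^{-i\omega_x}$.

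The main conceptual step is the passage from map-level commutativity (HDP) to the per-Kraus condition (SHP) for unitaries; in general HDP is strictly larger than SHP, but the distinction collapses when the CP map has Choi rank one, since the only Kraus operator, up to a phase, is $U_j$ itself. After this observation, the corollary is a clean corollary of the structure theorem rather than a separate argument, and no new calculation beyond checking $|c_x|=1$ is required.
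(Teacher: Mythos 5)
Your proposal is correct and follows essentially the same route as the paper, which simply observes that a commuting unitary channel belongs to $\SHP$ and then invokes \textbf{Proposition \ref{SHP_STRUCTURE}}; you merely fill in the details the paper leaves implicit (the explicit check of the ``if'' direction, the Choi-rank-one argument pinning $K_l=\alpha_l U_j$, and unitarity forcing $|c_x|=1$ so that all coefficients are nonzero and $\pi_j^n$ preserves Hamming distance globally). No gaps.
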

	It is worth noticing that the inverse of $U_j$ also belongs to free unitary operations for dephasing estimation.  
	It follows that the probe state preserves the ability of dephasing estimation under Hamming distance preserving unitary operations.

\subsection{THE RELATION BETWEEN SHP AND HDP}\label{subsection:HDP_SHP_relation}
    In this subsection, we investigate the relation of SHP and HDP for single-qubit and multiqubit states.
    We first review the definition of the Choi-Jamiołkowski matrix, which plays a role in exploring the relation of SHP and HDP. The Choi-Jamiołkowski matrix of the $d$-dimension operation $\mathcal{E}$ is defined as \cite{PhysRevA.103.022403, PhysRevA.80.022339}
	\begin{equation}
		J_\mathcal{E}=\begin{bmatrix}
			\mathcal{E}(|0\rangle\langle0|)&\cdots&\mathcal{E}(|0\rangle\langle j|)&\cdots&	\mathcal{E}(|0\rangle\langle d|)\\
			\vdots&\ddots&\vdots&\ddots&\vdots\\
			\mathcal{E}(|i\rangle\langle0|)&\cdots&\mathcal{E}(|i\rangle\langle j|)&\cdots&	\mathcal{E}(|i\rangle\langle d|)\\
			\vdots&\ddots&\vdots&\ddots&\vdots\\
			\mathcal{E}(|d\rangle\langle0|)&\cdots&\mathcal{E}(|d\rangle\langle j|)&\cdots&	\mathcal{E}(|d\rangle\langle d|)\\
		\end{bmatrix}.
	\end{equation}
\begin{proposition}
	\label{HDP_SIO_relation_1}
	HDP is equivalent to SHP in the single-qubit case.
\end{proposition}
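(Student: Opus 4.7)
The inclusion $\SHP\subseteq\HDP$ is immediate from the definitions --- each Kraus branch of an SHP map is itself Hamming distance preserving, hence so is the sum --- so the substantive claim is that every single-qubit HDP map admits a Kraus decomposition of SHP type. The plan is to exploit the Choi-Jamio\l{}kowski matrix introduced just above.

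First, I specialize the HDP condition to $n=1$. Since $h(x,y)\in\{0,1\}$, the condition $\langle i|\hdp(|x\rangle\langle y|)|j\rangle=0$ whenever $h(i,j)\neq h(x,y)$ says precisely that the ``diagonal sector'' $\mathrm{span}\{|0\rangle\langle 0|,|1\rangle\langle 1|\}$ and the ``anti-diagonal sector'' $\mathrm{span}\{|0\rangle\langle 1|,|1\rangle\langle 0|\}$ of $\mathcal{L}(\mathcal{H})$ are each stabilized by $\hdp$. Translating this into the Choi matrix, $J_\hdp$ has vanishing entries linking the two two-dimensional subspaces $V_0=\mathrm{span}\{|00\rangle,|11\rangle\}$ and $V_1=\mathrm{span}\{|01\rangle,|10\rangle\}$, so $J_\hdp=J_0\oplus J_1$ is block-diagonal with respect to $\mathcal{H}\otimes\mathcal{H}=V_0\oplus V_1$.

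Second, I invoke the standard Choi-Kraus correspondence: any spectral decomposition of $J_\hdp\succeq 0$ provides a valid Kraus representation of $\hdp$ via devectorization. Because $J_0$ and $J_1$ are themselves positive semidefinite, I diagonalize them separately. Rank-one spectral components of $J_0$ lie in $V_0$ and devectorize to diagonal operators $c_0|0\rangle\langle 0|+c_1|1\rangle\langle 1|$; those of $J_1$ lie in $V_1$ and devectorize to anti-diagonal operators $c_0|1\rangle\langle 0|+c_1|0\rangle\langle 1|$. These are exactly the two Kraus shapes produced by \textbf{Proposition \ref{SHP_STRUCTURE}}, corresponding to the only two Hamming distance preserving bijections on single-bit strings, $\pi=\mathrm{id}$ and $\pi(x)=x\oplus 1$. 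Trace preservation of $\hdp$ translates into the completeness relation $\sum_l K_l^\dagger K_l=\iden$ for this Kraus set, certifying $\hdp\in\SHP$.

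The whole argument is essentially bookkeeping once the Choi matrix is seen to split, and I do not expect a genuine obstacle. The one point to handle carefully is that a Kraus decomposition obtained from a spectral decomposition of $J_\hdp$ is only one among infinitely many unitarily-equivalent representations; what the block-diagonal structure buys us is that \emph{this particular} decomposition already has the SHP form, so no further unitary mixing is required. It is also worth noting why the same argument fails for $n\geq 2$: there the Hamming-mode decomposition of $J_\hdp$ need not refine into Kraus-level-separable blocks, so individual Kraus operators extracted from a spectral decomposition can mix several distance classes --- which is precisely the content of the strict inclusion $\SHP\subsetneq\HDP$ I expect to see in the multiqubit analysis that follows.
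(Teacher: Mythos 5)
Your proof is correct and rests on the same key observation as the paper's: for $n=1$ the HDP condition forces the Choi--Jamio\l{}kowski matrix to split into a direct sum over the two Hamming sectors, from which an SHP-type (diagonal/anti-diagonal) Kraus representation can be read off. The only cosmetic difference is that you extract that representation by spectrally decomposing the two blocks of $J_{\hdp}$, whereas the paper splits each original Kraus operator $D_a$ into its diagonal and anti-diagonal parts and then verifies that the resulting map has the same Choi matrix.
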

\begin{proof}
	Let $\mathcal{D}(\cdot)=\sum_{a}D_a(\cdot)D_a^{\dagger}$ be a single-qubit CPTP map which belongs to HDP. When the operation $\mathcal{D}$ belongs to $\HDP$, the corresponding Choi-Jamiołkowski matrix is written as 
	\begin{equation}
		\label{D_Choi}
		J_\mathcal{D}=\begin{bmatrix}
		p_{0|0}&0&0&\gamma_0\\
		0&p_{1|0}&\gamma_1&0\\
		0&\gamma_1^*&p_{0|1}&0\\
		\gamma_0^*&0&0&p_{1|1}
		\end{bmatrix},
	\end{equation}
	where $\gamma_0=\sum_a\langle0|D_a|0\rangle\langle1|D_a^\dagger|1\rangle$, $\gamma_1=\sum_a\langle1|D_a|0\rangle\langle1|D_a^\dagger|0\rangle$, and 
 $p_{x|y}=\sum_a\langle x|D_a|y\rangle\langle y|D_a^\dagger|x\rangle$ with $x,y=0,1$.

 Now consider an operation $\mathcal{S}\in\SHP$, which is related to $\mathcal{D}$ as
 $\mathcal{S}(\cdot)=\sum_{a}S_a(\cdot)S_a^\dagger+\sum_{a}T_a(\cdot)T_a^\dagger$ with
	\begin{equation}
		\label{SHP=HDP}
		\begin{split}
			S_a&=\begin{bmatrix}
				\langle0|D_a|0\rangle&0\\
				0&\langle1|D_a|1\rangle
			\end{bmatrix},\\
			T_a&=\begin{bmatrix}
				0&\langle0|D_a|1\rangle\\
				\langle1|D_a|0\rangle&0
			\end{bmatrix}.
		\end{split}
	\end{equation}
 It is directly checked that the Choi-Jamiołkowski matrix of $\mathcal{S}$ is also Eq.~(\ref{D_Choi}), and therefore, $\mathcal{S}=\mathcal{D}$.
\end{proof}
Furthermore, in \textbf{Appendix \ref{sec:appendix-a_0}} we prove that, for single-qubit states, $\HDP$ is a strict subset of operations which do not increase the QFI of dephasing estimation.
\begin{proposition}
	In the multiqubit case, SHP $\subsetneq$ HDP.
\end{proposition}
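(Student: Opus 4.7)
The plan is to prove the two ingredients separately: the inclusion $\SHP \subseteq \HDP$, which is essentially by definition, and its strictness for $n \geq 2$, which requires an explicit counterexample. The inclusion follows immediately from the definitions \eqref{eq:F1}--\eqref{eq:F2}: if $\cptp = \sum_l \mathcal{K}_l$ with every Kraus branch obeying $\mathcal{K}_l \circ \pd = \pd \circ \mathcal{K}_l$, then summing over $l$ yields $\cptp \circ \pd = \pd \circ \cptp$, so $\cptp \in \HDP$ by \textbf{Proposition~\ref{prop:1}} (this is already remarked right after Eq.~\eqref{eq:F2}).

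For the strict inclusion, I would construct an explicit two-qubit CPTP map $\cptp_0$ that satisfies the Hamming-distance-preserving matrix-element condition \eqref{D_HDP} but admits no Kraus decomposition of the rigid form required by \textbf{Proposition~\ref{SHP_STRUCTURE}}. Recall that every $\SHP$ Kraus operator has the form $K_l = \sum_x c_{l,x}|\pi_l(x)\rangle\langle x|$ with $\pi_l$ a \emph{global} Hamming-distance-preserving bijection of $\NS$, i.e., an element of the hyperoctahedral group $\mathbb{Z}_2^n \rtimes S_n$ (of size eight for $n=2$). This class is highly rigid: at most one nonzero entry per row and per column, and the positions are dictated by a globally consistent permutation. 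My candidate $\cptp_0$ would be a two-qubit map that non-trivially mixes the two orbits of $h=2$ off-diagonals, $\{|00\rangle\langle 11|,|11\rangle\langle 00|\}$ and $\{|01\rangle\langle 10|,|10\rangle\langle 01|\}$, through a Kraus operator with two nonzero entries in incompatible positions, while still collectively killing the forbidden $h(i,j)\neq h(x,y)$ matrix elements via cancellations between branches. Checking $\cptp_0 \in \HDP$ amounts to a direct verification of Eq.~\eqref{D_HDP} on basis off-diagonals.

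The main obstacle is the non-existence half: since Kraus decompositions are non-unique, I cannot simply exhibit one ``bad'' decomposition and conclude. The natural remedy is to pass to the Choi-Jamio\l{}kowski matrix $J_{\cptp_0}$, which is unique. An $\SHP$ decomposition of $\cptp_0$ is equivalent to a PSD decomposition $J_{\cptp_0} = \sum_l |v_l\rangle\langle v_l|$ in which each $|v_l\rangle = \sum_x c_{l,x}|\pi_l(x),x\rangle$ lies in one of the ``graph subspaces'' $V_{\pi} = \mathrm{span}\{|\pi(x),x\rangle : x\in\NS\}$. The question thus reduces to a finite convex-geometric one: does $J_{\cptp_0}$ belong to the convex cone generated by rank-one operators supported in the $V_\pi$? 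For $n=2$ there are eight such four-dimensional subspaces inside a sixteen-dimensional Choi space, and this cone can be characterized explicitly; my $\cptp_0$ should be chosen precisely to lie outside it while still being PSD with the HDP block-diagonal structure across Hamming sectors. Establishing such a separation is the technical heart of the argument and is analogous to the standard separation $\mathrm{IO} \subsetneq \mathrm{MIO}$ in the resource theory of coherence.
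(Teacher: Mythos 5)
Your overall strategy is the same as the paper's: the inclusion $\SHP\subseteq\HDP$ is definitional, and strictness is shown by exhibiting an explicit channel in $\HDP$ whose Kraus structure is incompatible with the rigid monomial form $K_l=\sum_x c_{lx}|\pi_l(x)\rangle\langle x|$ forced by \textbf{Proposition~\ref{SHP_STRUCTURE}}. You have also correctly identified the mechanism of the counterexample (a Kraus operator with two nonzero entries in the same column, with forbidden cross terms cancelled between branches). However, as written the proof has a genuine gap: the counterexample is never actually constructed. You describe what $\cptp_0$ ``would be'' and ``should be chosen'' to do, but no Kraus operators are written down and the HDP condition~(\ref{D_HDP}) is never verified on anything concrete. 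The paper supplies this: the five-operator channel $\mathcal{W}$ of Eq.~(\ref{DIFF_SHP}) (valid for all $n\geq 2$, so the unresolved $n=2$ case you worry about elsewhere does not arise here).

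The second, more substantive issue is that you defer the non-existence half --- the ``technical heart'' --- to a convex-geometric separation: characterizing the cone generated by rank-one operators supported on the eight graph subspaces $V_\pi$ and showing $J_{\cptp_0}$ lies outside it. This is much heavier machinery than needed, and you give no indication of how the separation would actually be established. The standard shortcut, which the paper uses via \textbf{Lemma~\ref{U_F_Operator}} (the unitary/isometry freedom of Kraus decompositions), is that \emph{every} Kraus operator of \emph{any} decomposition of the channel is a linear combination $\sum_i u_{li}W_i$ of the given ones. Hence it suffices to check that the linear span of $\{W_i\}$ contains no nonzero operator of the form $\sum_x c_x|\pi(x)\rangle\langle x|$ --- a finite linear-algebra verification over the $2^n n!$ Hamming-distance-preserving bijections, not a cone-membership problem. (In your Choi language: each $|v_l\rangle$ in a rank-one PSD decomposition of $J$ must lie in $\mathrm{range}(J)$, so you only need $\mathrm{range}(J)\cap\bigcup_\pi V_\pi=\{0\}$, which is a necessary condition far easier to test than full cone membership.) Without the explicit channel and this verification carried out, the strict-inclusion half of the proposition is not yet proved.
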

	SHP is a subset of HDP by their definitions. Then we give a channel which belongs to HDP but not to SHP. The channel can be written as $\mathcal{W}(\cdot)=\sum_{i=0}^{4}W_i(\cdot)W_i^{\dagger}$, where
	\begin{equation}
		\begin{split}
			\label{DIFF_SHP}
			W_0&=\dfrac{1}{2}|0\rangle\langle 1|+\dfrac{1}{2\sqrt{2}}(|1\rangle\langle 0|+|2\rangle\langle 0|),\\
			W_1&=\dfrac{1}{2}|1\rangle\langle 1|+\dfrac{1}{2\sqrt{2}}(|0\rangle\langle 0|+|3\rangle\langle 0|),\\
			W_2&=\dfrac{1}{2}|2\rangle\langle 1|+\dfrac{1}{2\sqrt{2}}(|0\rangle\langle 0|-|3\rangle\langle 0|),\\
			W_3&=\dfrac{1}{2}|3\rangle\langle 1|+\dfrac{1}{2\sqrt{2}}(|1\rangle\langle 0|-|2\rangle\langle 0|),\\
			W_4&=\iden-|0\rangle\langle 0|-|1\rangle\langle 1|.
		\end{split}		
	\end{equation}
	By the definition of HDP Eq.~(\ref{D_HDP}), this channel $\mathcal{W}$ is a Hamming distance preserving operation. The following lemma is used to prove that $\mathcal{W}$ does not belong to SHP.
	\begin{lemma}[Theorem 8.2 of \cite{nielsen2002quantum}]
		\label{U_F_Operator}
	Consider two given quantum channels $\mathcal{E}(\cdot)=\sum_{i=1}^mE_i(\cdot) E_i^\dagger$ and $\mathcal{F}(\cdot)=\sum_{j=1}^nF_j(\cdot) F_j^\dagger$. 
 Then $\mathcal{E}=\mathcal{F}$ if and only if there is a $m\times n$ linear isometry $u$ such that $E_i=\sum_ju_{ij}F_j$.
	\end{lemma}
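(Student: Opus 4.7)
The plan is to prove the two directions separately, with the substantive work lying in the reverse direction and handled via the Choi--Jamio\l{}kowski matrix already introduced in the paper. As a preprocessing step, I would pad the shorter of the two Kraus lists with zero operators so that both channels are represented by $N:=\max(m,n)$ operators. Under this convention, the desired isometry $u$ becomes an $N\times N$ unitary matrix, and the original rectangular-isometry statement is recovered at the end by restricting to the indices corresponding to genuine (non-zero) Kraus operators. The forward direction is then a one-line check: if $E_i=\sum_j u_{ij}F_j$ with $\sum_i u_{ij}^* u_{ik}=\delta_{jk}$, then for every $\rho$,
\begin{equation}
\sum_i E_i\rho E_i^\dagger=\sum_{j,k}\Big(\sum_i u_{ij}u_{ik}^*\Big)F_j\rho F_k^\dagger=\sum_j F_j\rho F_j^\dagger,
\end{equation}
so $\mathcal{E}=\mathcal{F}$.

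For the reverse direction, I would encode both channels into their Choi matrices. Setting $|\Omega\rangle=\sum_k|k\rangle|k\rangle$ and defining the vectorization $|E_i\rangle\rangle:=(E_i\otimes\iden)|\Omega\rangle$, we have $J_\mathcal{E}=(\mathcal{E}\otimes\iden)(|\Omega\rangle\langle\Omega|)=\sum_i|E_i\rangle\rangle\langle\langle E_i|$, and analogously $J_\mathcal{F}=\sum_j|F_j\rangle\rangle\langle\langle F_j|$. Since $\mathcal{E}=\mathcal{F}$ implies $J_\mathcal{E}=J_\mathcal{F}$, the two ensembles $\{|E_i\rangle\rangle\}$ and $\{|F_j\rangle\rangle\}$ provide two rank-one sum decompositions of a single positive semidefinite operator. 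An ensemble-freedom lemma (discussed below) then produces a unitary $u$ with $|E_i\rangle\rangle=\sum_j u_{ij}|F_j\rangle\rangle$. Because the vectorization $M\mapsto(M\otimes\iden)|\Omega\rangle$ is linear and injective on operators, this relation translates immediately into the claimed identity $E_i=\sum_j u_{ij}F_j$.

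The main obstacle, and the only non-routine ingredient, is the ensemble-freedom lemma for positive operators: if $\sum_i|a_i\rangle\langle a_i|=\sum_j|b_j\rangle\langle b_j|$, then the two vector families are related by a unitary on the index labels (after zero-padding to equal size). My route is to fix a spectral decomposition $J=\sum_k\lambda_k|g_k\rangle\langle g_k|$ of the common operator with $\lambda_k>0$ and argue that every rank-one sum decomposition $\{|a_i\rangle\}$ of $J$ necessarily has the form $|a_i\rangle=\sum_k v_{ik}\sqrt{\lambda_k}|g_k\rangle$ for some matrix $v$; this step uses that each $|a_i\rangle$ must lie in the range of $J$, which is spanned by the $|g_k\rangle$ with $\lambda_k>0$. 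Imposing $\sum_i|a_i\rangle\langle a_i|=\sum_k\lambda_k|g_k\rangle\langle g_k|$ and comparing coefficients in the orthonormal basis $\{|g_k\rangle\}$ then forces $\sum_i v_{ik}v_{il}^*=\delta_{kl}$, so $v$ is column-orthonormal. Applying this to both ensembles yields matrices $v^E$ and $v^F$ with the same property, and $u:=v^E(v^F)^\dagger$ is the required unitary. Stripping the rows and columns corresponding to padded zero operators recovers the rectangular linear isometry stated in the lemma.
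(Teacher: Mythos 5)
The paper does not actually prove this lemma: it is imported verbatim as Theorem~8.2 of Nielsen and Chuang, so there is no in-paper argument to compare against. Your route --- vectorize the Kraus operators into Choi vectors $|E_i\rangle\rangle=(E_i\otimes\iden)|\Omega\rangle$, observe that $\mathcal{E}=\mathcal{F}$ forces $\sum_i|E_i\rangle\rangle\langle\langle E_i|=\sum_j|F_j\rangle\rangle\langle\langle F_j|$, and invoke the ensemble-freedom (HJW-type) classification of rank-one decompositions of a positive semidefinite operator --- is essentially the textbook proof, and the forward direction and the injectivity-of-vectorization step are both correct.

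One step needs repair. With $J=\sum_{k=1}^r\lambda_k|g_k\rangle\langle g_k|$ of rank $r$ and $N$ the padded number of Kraus operators, your matrices $v^E,v^F$ are $N\times r$ with orthonormal \emph{columns}, so $u:=v^E(v^F)^\dagger$ satisfies $u^\dagger u=v^F(v^F)^\dagger$, which is a rank-$r$ projection rather than $\iden_N$ whenever $r<N$. As written, $u$ is only a partial isometry, so it does not yet certify the isometry condition the lemma asserts (and which your own forward-direction computation requires). The fix is standard: extend $v^E$ and $v^F$ to $N\times N$ unitaries $\tilde v^E,\tilde v^F$ by appending orthonormal columns; since the appended columns of $\tilde v^F$ are orthogonal to the original ones, the identity $\sum_j u_{ij}|F_j\rangle\rangle=\sum_{k\le r}v^E_{ik}\sqrt{\lambda_k}|g_k\rangle=|E_i\rangle\rangle$ survives with $u=\tilde v^E(\tilde v^F)^\dagger$ now genuinely unitary. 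A last small caveat: when you strip the padded rows at the end, the surviving $m\times n$ block of a unitary has orthonormal rows but not necessarily orthonormal columns, so it is a co-isometry rather than an isometry when $m<n$; Nielsen and Chuang avoid this by keeping the square unitary acting on the zero-padded lists, which is also the form in which the lemma is actually used elsewhere in this paper.
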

We find that the linear combination of $\left\lbrace W_i \right\rbrace$ cannot be expressed as Eq.~(\ref{FK}). Based on \textbf{Lemma\ \ref{U_F_Operator}.}, $\mathcal{W}$ is not a selective Hamming distance preserving operation.

\section{THE COMPARISON BETWEEN THE RESOURCE THEORIES OF DEPHASING ESTIMATION AND QUANTUM COHERENCE}
\label{COMPARISON}
In this section, we compare free operations between the resource theories of dephasing estimation and quantum coherence. Our results show that coherence cannot be regarded as the quantum resource which underlies the precision of dephasing estimation.
\subsection{A BRIEF REVIEW OF THE RESOURCE THEORY OF QUANTUM COHERENCE}
In the resource theory of quantum coherence \cite{RevModPhys.89.041003}, a computational basis $\{|i\rangle\}_{i=0}^{2^n-1}$ is prefixed. The completely dephasing (CD) channel is defined as
	\begin{equation}
	\cd(\cdot)=\sum_{i=0}^{2^n-1}|i\rangle\langle i|\cdot|i\rangle\langle i|.
	\end{equation}
 Then a state $\rho$ is free if and only if $\Delta(\rho)=\rho$. These free states are called incoherent states.

 In the following, we briefly review two sets of free operations. The first set is called dephasing-covariant incoherent operations (DIO) \cite{PhysRevA.94.052336, PhysRevLett.117.030401}, defined as quantum operations which commute with the CD channel, i.e.,
 \begin{equation}
 \mathcal{E}\in\DIO \Leftrightarrow \cd\circ\mathcal{E}=\mathcal{E}\circ\cd.
\end{equation}
The second set is called strictly incoherently operations (SIO) \cite{PhysRevLett.116.120404, PhysRevA.94.052336}, defined as
 \begin{equation}
 \mathcal{E}\in\SIO \Leftrightarrow \cd\circ\mathcal{E}_j=\mathcal{E}_j\circ\cd,\ \forall j,
\end{equation}
where $\mathcal{E}=\sum_j\mathcal{E}_j,\ \mathcal{E}_j(\cdot)=E_j(\cdot) E_j^\dagger$, and $E_j$ are Kraus operators satisfying $\sum_jE_j^\dagger E_j=\iden$.
It is worth noting that, the form of every Kraus operator $E_j$ can be written as
	\begin{equation}
			\label{SIO}
			E_j=\sum_{x=0}^{2^n-1}c_{jx}|e_j(x)\rangle\langle x|,
	\end{equation}
	where $e_j:\NS\mapsto\NS$ is one-to-one and $\sum_j|c_{jx}|^2=1$. 
 
It is worth noticing that, the resource theory of coherence is symmetric under permutations of states in the computational basis. 
Direct calculation shows that, the widely studied sets of free operations, including maximally incoherent operations (MIO), incoherent operations (IO), DIO, SIO, and physically incoherent operations (PIO) (see Ref. \cite{PhysRevLett.117.030401} for their definitions and comparisons), are closed under permutations of states in the computational basis.
Therefore, the two states $\sum_i \psi_i|i\rangle$ and $\sum_i \psi_i |\pi(i)\rangle$, where $\pi$ is a permutation, should contain the same amount of coherence.



\subsection{SINGLE QUBIT CASE}
We first explore the relation between the sets of free operations in the single qubit case.
\begin{proposition}
	\label{eq_SIO_SHP}
	In the single-qubit case, SHP, HDP, SIO, and DIO are equivalent to each other.
\end{proposition}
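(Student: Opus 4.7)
The approach is to close a chain of equalities building on \textbf{Proposition~\ref{HDP_SIO_relation_1}}, which already gives $\SHP = \HDP$. It then suffices to establish $\HDP = \DIO$ and $\SHP = \SIO$ in dimension two, after which the four sets coincide.

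For the first identity, I would exploit the fact that the single-qubit PD channel admits the convex decomposition
\begin{equation}
    \pdone = e^{-\theta}\,\mathrm{id} + (1 - e^{-\theta})\,\cd,
\end{equation}
which one checks by a one-line comparison of the matrix elements of $\pdone$ with those of $\cd$ on the Bloch representation of an arbitrary single-qubit state. Since $\mathrm{id}$ commutes with every CPTP map, substituting this decomposition gives
\begin{equation}
    \cptp\circ\pdone - \pdone\circ\cptp = (1-e^{-\theta})\bigl(\cptp\circ\cd - \cd\circ\cptp\bigr),
\end{equation}
so for any $\theta > 0$ the HDP commutation condition reduces exactly to the DIO commutation condition. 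This yields $\HDP = \DIO$.

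For the second identity, I would compare the Kraus characterizations given by \textbf{Proposition~\ref{SHP_STRUCTURE}} for SHP and by Eq.~(\ref{SIO}) for SIO. Both sets consist of channels admitting a Kraus decomposition whose operators have the form $\sum_x c_x |f(x)\rangle\langle x|$ with $f$ a bijection on $\NS$; SHP additionally requires $f$ to preserve the Hamming distance, while SIO imposes no such restriction. For $n=1$, however, the only bijections of $\{0,1\}$ are the identity and the bit-flip, and each automatically preserves the single non-trivial distance $h(0,1)=1$. Hence every SIO Kraus operator already qualifies as an SHP Kraus operator, and $\SHP = \SIO$.

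Combining these two identifications with $\SHP = \HDP$ closes the loop $\SHP = \HDP = \DIO$ and $\SHP = \SIO$, producing the desired four-way equivalence. I do not expect a genuine obstacle here: the content of the statement is really the combination of two simple observations, namely that the single-qubit PD channel is an affine combination of $\mathrm{id}$ and $\cd$ (collapsing HDP onto DIO), and that in dimension two the Hamming-distance-preservation constraint on the bijections is vacuous (collapsing SHP onto SIO).
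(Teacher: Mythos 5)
Your proof is correct, and one leg of it takes a genuinely different route from the paper's. The paper closes the loop as $\SHP=\HDP$ (its Proposition~\ref{HDP_SIO_relation_1}), then $\SIO=\SHP$ by matching the explicit single-qubit SIO Kraus operators against the SHP condition --- essentially your second step --- and finally imports the known single-qubit equivalence $\DIO=\SIO$ from the coherence literature. You instead prove $\HDP=\DIO$ directly from the affine decomposition $\pdone=e^{-\theta}\,\mathrm{id}+(1-e^{-\theta})\cd$, which indeed holds for one qubit and converts the PD-commutation condition into the $\cd$-commutation condition in one line. This is more self-contained than the paper's citation, and it also makes transparent why the collapse is special to $n=1$: for $n\ge 2$ the PD channel carries distinct decay factors $e^{-h\theta}$ on different Hamming modes and is no longer an affine combination of $\mathrm{id}$ and $\cd$, consistent with the paper's later finding that $\HDP\subsetneq\DIO$ for multiqubit systems. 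One small point to make explicit: Proposition~\ref{SHP_STRUCTURE} is stated only as a necessary condition on the Kraus form, so to conclude $\SIO\subseteq\SHP$ from the vacuousness of the distance constraint at $n=1$ you also need the easy converse, namely that Kraus operators of the form $\sum_x c_x|\pi(x)\rangle\langle x|$ with $\pi$ a Hamming-distance-preserving bijection satisfy Eq.~(\ref{SHP}); this is an immediate check (each such operator maps $|x\rangle\langle y|$ to a multiple of $|\pi(x)\rangle\langle \pi(y)|$ with $h(\pi(x),\pi(y))=h(x,y)$), and with it your argument is complete.
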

\begin{proof}
It has been proved in Ref. \cite{PhysRevA.94.052336} that DIO$=$SIO in the single-qubit case. In \textbf{Proposition \ref{HDP_SIO_relation_1}}, we proved that $\SHP=\HDP$. Therefore, here we only need to prove the equivalence between $\SIO$ and $\SHP$.

According to \cite{PhysRevLett.119.140402}, the general form of a single-qubit strictly incoherent operation is $\mathcal{E}^{\mathrm{SIO}}(\cdot)=\sum_{l=1}^4K_l(\cdot) K_l^\dagger$, where
    \begin{equation}
    \begin{split}
        K_1&=\begin{bmatrix}a_1&0\\0&b_1\end{bmatrix},K_2=\begin{bmatrix}0&b_2\\a_2&0\end{bmatrix},\\
        K_3&=\begin{bmatrix}a_3&0\\0&0\end{bmatrix},K_4=\begin{bmatrix}0&0\\a_4&0\end{bmatrix},
    \end{split}
    \end{equation}
	where $a_i$ is real for $i=1,2,3,4$ and $\sum_{i=1}^{4}a^2_i=\sum_{j=1}^2\left | b_j\right|^2=1$. These Kraus operators satisfy Eq.~(\ref{SHP}), so SIO $\subset$ SHP for single-qubit states. Moreover, \textbf{Proposition \ref{SHP_STRUCTURE}} implies SHP is a subset of SIO because Hamming distance preserving functions are one-to-one. Hence SHP$=$SIO. This completes the proof.
\end{proof}
    In fact, the HDP cone Eq.~$(\ref{SIO_region})$ is also the SIO cone \cite{PhysRevLett.119.140402, Shi2017}, which reveals the equivalence of SIO and HDP for single-qubit states. 

\subsection{MULTIQUBIT CASE}
In this subsection, we explore the comparison between SHP, HDP, SIO, and DIO in the multiqubit case. 
The hierarchy of these sets of operations is shown in Fig. \ref{f_2}.
It is inferred by the definitions that $\SIO\subseteq\DIO$, $\HDP\subseteq\DIO$ and $\SHP\subseteq\SIO\cap\HDP$.

	\begin{figure}
		\centering
		\includegraphics[width=0.45\textwidth]{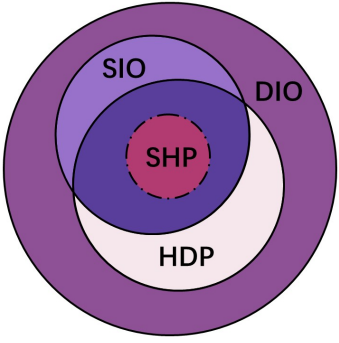}
		\caption{The comparison between SHP, HDP, SIO and DIO in the multiqubit case. The boundary between $\SHP$ and the intersection of $\SIO$ and $\HDP$ is marked with the chain line. This boundary is unclear in the $2$-qubit case but clear in the $m$-qubit case, where $m\geq3$.}
		\label{f_2}
	\end{figure}

In the following, we will first show that neither $\SIO$ nor $\HDP$ contains the other quantum operations belonging to $\SIO\setminus\HDP$ and $\HDP\setminus\SIO$.
 Consider an $n$-qubit ($n\geq2$) unitary operator
\begin{equation}
	U=\iden-|2\rangle\langle2|+|2\rangle\langle3|+|3\rangle\langle2|-|3\rangle\langle3|.
\end{equation}
The corresponding unitary operation belongs to $\SIO$ because it satisfies Eq.~(\ref{SIO}). However, because $U|0\rangle\langle2|U^\dagger=|0\rangle\langle3|$ and $h(0,2)=1\neq 2=h(0,3)$, this unitary operation does not belong to $\HDP$.

Next consider the $n$-qubit ($n\geq2$) operation $\mathcal{W}$ in Eq.~(\ref{DIFF_SHP}). It belongs to $\HDP$. It can be checked that any linear combination of the five Kraus operators of $\mathcal{W}$ is not of the form as Eq.~(\ref{SIO}). Based on Lemma \ref{U_F_Operator}, $\mathcal{W}\in\HDP\setminus\SIO$.

Now we will show that $\SHP$ is a strict subset of $\SIO\cap\HDP$ in the $m$-qubit case, where $m\geq3$. This is achieved by constructing the following channel $\mathcal{R}(\cdot)=\sum_{i=0}^{4}R_i(\cdot) R_i^{\dagger}$, where
	\begin{equation}
		\begin{split}
			R_0&=\dfrac{1}{\sqrt{2}}|0\rangle \langle0|+\dfrac{1}{2}|2\rangle \langle1|+\dfrac{1}{2}|3\rangle \langle6|,\\
			R_1&=\dfrac{1}{\sqrt{2}}|6\rangle \langle0|+\dfrac{1}{2}|2\rangle \langle1|-\dfrac{1}{2}|3\rangle \langle6|,\\
			R_2&=\dfrac{1}{2}|1\rangle \langle1|+\dfrac{1}{2}|6\rangle \langle6|+\dfrac{1}{\sqrt{2}}|7\rangle \langle7|,\\
			R_3&=\dfrac{1}{2}|4\rangle \langle1|-\dfrac{1}{2}|3\rangle \langle6|+\dfrac{1}{\sqrt{2}}|7\rangle \langle7|,\\
			R_4{\Large }&=\iden-|0\rangle \langle0|-|1\rangle \langle1|-|6\rangle \langle6|-|7\rangle \langle7|.\\
		\end{split}
	\end{equation}
The channel $\mathcal{R}$ belongs to the intersection of SIO and HDP but not to SHP by \textbf{lemma \ref{U_F_Operator}}, that is, $\mathcal{R}\in \SIO\cap\HDP\setminus\SHP$ for $m$-qubit states. 
However, it is unclear whether $\SHP=\SIO\cap\HDP$ in the $2$-qubit case.


Finally, we show that $\SIO\cup\HDP$ is a strictly subset of $\DIO$. As proved in Ref. \cite{2016arXiv160406524B}, the quantum operation $\mathcal{N}(\cdot)=\sum_{i=0}^{4}N_i(\cdot)N_i^\dagger$ with
	\begin{equation}
		\begin{split}
			N_0&=\dfrac{1}{2}|0\rangle\langle1|+\dfrac{1}{2\sqrt{3}}|1\rangle\langle0|-\dfrac{1}{2\sqrt{3}}|2\rangle\langle0|+\dfrac{1}{2\sqrt{3}}|3\rangle\langle0|,\\
			N_1&=\dfrac{1}{2\sqrt{3}}|0\rangle\langle0|+\dfrac{1}{\sqrt{2}}|0\rangle\langle2|+\dfrac{1}{\sqrt{6}}|0\rangle\langle3|+\dfrac{1}{2}|1\rangle\langle1|\\
			&+\dfrac{1}{2\sqrt{3}}|2\rangle\langle0|+\dfrac{1}{2\sqrt{3}}|3\rangle\langle0|,\\
			N_2&=\dfrac{1}{2\sqrt{3}}|0\rangle\langle0|-\dfrac{1}{\sqrt{2}}|0\rangle\langle2|+\dfrac{1}{\sqrt{6}}|0\rangle\langle3|+\dfrac{1}{2\sqrt{3}}|1\rangle\langle0|\\
			&+\dfrac{1}{2}|2\rangle\langle1|-\dfrac{1}{2\sqrt{3}}|3\rangle\langle0|,\\
			N_3&=\dfrac{1}{2\sqrt{3}}|0\rangle\langle0|-\dfrac{\sqrt{6}}{3}|0\rangle\langle3|-\dfrac{1}{2\sqrt{3}}|1\rangle\langle0|-\dfrac{1}{2\sqrt{3}}|2\rangle\langle0|\\
			&+\dfrac{1}{2}|3\rangle\langle1|,\\
			N_4&=\iden-(|0\rangle\langle0|+|1\rangle\langle1|+|2\rangle\langle2|+|3\rangle\langle3|),
		\end{split}
	\end{equation}
 belongs to $\DIO\setminus\SIO$. Because $\mathcal{N}(|1\rangle\langle2|)=\dfrac{1}{2\sqrt{2}}(|1\rangle\langle0|-|2\rangle\langle0|)$, and $h(1,2)\neq h(0,1)$ (or $h(1,2)\neq h(0,2)$), $\mathcal{N}$ is not in $\HDP$. Hence, the set $\DIO\setminus(\SIO\cup\HDP)$ is not empty.

Different from the sets of free operations in the resource theory of coherence, neither SHP nor HDP is closed under permutations of states in the computational basis. Therefore, off-diagonal elements of a density matrix are not equivalent in the present resource theory if they belong to different Hamming modes. For example, the states $|\psi_1\rangle=(|0\rangle+|1\rangle)/\sqrt{2}$ and $|\psi_2\rangle=(|0\rangle+|3\rangle)/\sqrt{2}$ contain the same amount of coherence, but as $|0\rangle\langle1|$ (or $|1\rangle\langle0|$) and $|0\rangle\langle3|$ (or $|3\rangle\langle0|$) belong to different Hamming modes, the powers of the two probe states in dephasing estimation are different.
In fact, we have
$\PDqfi(|\psi_1\rangle\langle\psi_1|)=e^{-2\theta}/(1-e^{-2\theta})$, but $\PDqfi(|\psi_2\rangle\langle\psi_2|)=4e^{-4\theta}/(1-e^{-4\theta})$. Therefore, despite some similarities between the two resource theories, coherence cannot be regarded as the quantum resource underlying the precision of dephasing estimation.

\section{THE COMPARISON BETWEEN THE RESOURCE THEORIES OF DEPHASING ESTIMATION AND $U(1)$ ASYMMETRY}
\label{COMPATIBILITY}
The QFI of phase estimation is a resource measure of $U(1)$ asymmetry \cite{PhysRevA.93.022122, PhysRevA.90.062110,PhysRevLett.129.190502}. 
It has been shown that there is a probe incompatibility between phase estimation and dephasing estimation, namely, no optimal probe states exist for detecting phase and dephasing parameters simultaneously \cite{PhysRevA.94.052108, PhysRevX.12.011039}.
This motivates us to compare the resource theories of $U(1)$ asymmetry and dephasing estimation.

\subsection{A BRIEF REVIEW OF QUANTUM FISHER INFORMATION OF PHASE ESTIMATION}
	In this subsection, we briefly review the QFI of phase estimation. Given Hamiltonian $H$ and the probe state $\rho$, $\rho(t)\equiv e^{-iHt}\rho e^{iHt}$ represents the state of system at time $t$. The task of phase estimation is to estimate the parameter $t$ by quantum measurement.
 Let $\rho=\sum_{i}p_i|\psi_i\rangle \langle\psi_i|$ be the spectral decomposition of the probe state $\rho$.
 The QFI of phase estimation can be calculated as \cite{BRAUNSTEIN1996135,PhysRevLett.72.3439, Toth_2014}
 \begin{equation}
	\PEqfi(\rho)=2\sum_{i,k}\dfrac{(p_i-p_k)^2}{p_i+p_k}\left|  \langle \psi_i|H|\psi_k\rangle\right|^2.
\end{equation}
Furthermore, when the probe is in a pure state $|\psi\rangle \langle\psi|$, the QFI reduces to
	\begin{equation}
		\PEqfi(|\psi\rangle \langle\psi|)=4 \Delta H^2,
	\end{equation}
	where $\Delta H^2:=\langle\psi | H^2|\psi\rangle-\langle\psi| H|\psi\rangle^2$ is the variance of Hamiltonian $H$.
 
In this section, we consider the setting of $n$ qubits with no interaction. For each qubit, the ground state energy is zero and the excited state energy is $\epsilon$. The form of $n$-qubit Hamiltonian is given by
	\begin{equation}
		H=\sum_{i=0}^{2^n-1}N^i_1\epsilon|i\rangle \langle i|,
	\end{equation}
	where $N^i_1$ represents the number of $1$'s in the binary string $i$.

\subsection{SINGLE QUBIT CASE}
	For $n=1$, the Hamiltonian reads $H=\epsilon|1\rangle \langle 1|$. For a single-qubit probe state $\sigma$, whose Bloch vector is $(r\cos{\phi},r\sin{\phi},z)$, its spectral decomposition is
	\begin{equation}
		\sigma=\frac{1}{2}[(1+\sqrt{z^2+r^2})|+r\rangle \langle +r|+(1-\sqrt{z^2+r^2})|-r\rangle \langle -r|],
	\end{equation}
	where eigenstates are given by 
	\begin{eqnarray}
		|+r\rangle&=&\dfrac{(z+\sqrt{z^2+r^2})|0\rangle+re^{i\phi}|1\rangle}{\sqrt{r^2+(z+\sqrt{z^2+r^2})^2}},\nonumber\\
		|-r\rangle&=&\dfrac{-r|0\rangle+(z+\sqrt{z^2+r^2})e^{i\phi}|1\rangle}{\sqrt{r^2+(z+\sqrt{z^2+r^2})^2}}.
	\end{eqnarray}
	Then the phase estimation QFI of $\sigma$ is calculated as
\begin{equation}\label{eq:peqfi_quibt}
	\PEqfi(\sigma)=r^2\epsilon^2.
\end{equation}
	Therefore, in the single-qubit case, $\PEqfi$ cannot be increased under HDP because Eq.~(\ref{SIO_region}) shows that $r$ is non-increasing under HDP.

 Nevertheless, by comparing Eqs. (\ref{eq:peqfi_quibt}) and (\ref{eq:pdqfi_qubit}), we find that $\PEqfi$ and $\PDqfi$ give different ordering of states even in the single-qubit case. It means that even though both $\PEqfi$ and $\PDqfi$ reaches maximum for uniform superposition states, there is a possibility that for two single-qubit states $\sigma_1$ and $\sigma_2$, $\PEqfi(\sigma_1)>\PEqfi(\sigma_2)$ but $\PDqfi(\sigma_1)<\PDqfi(\sigma_2)$.
 
\subsection{MULTIQUBIT CASE}
In the multiqubit case, the golden states in the resource theory of dephasing estimation are uniform superposition states $|+^{(n)}\rangle$, while $\PEqfi$ reaches its maximum for GHZ states defined as $|GHZ\rangle=(|0\rangle+e^{-i\eta_0}|2^n-1\rangle)/\sqrt{2}$ \cite{PhysRevLett.96.010401}. This discordance in the maximally resourceful states is a possible origin of the probe incompatibility, which stands for the absence of an optimal probe state for detecting phase and dephasing parameters simultaneously \cite{PhysRevX.12.011039, PhysRevA.94.052108}.
It is then natural to ask the following question: in what circumstances, can one increase $\PEqfi$ of a probe state, and meanwhile preserve its ability of dephasing estimation?

An illustrative example is as follows. Consider a two-qubit pure state $|\psi_1\rangle=\frac{1}{\sqrt2}(|1\rangle+|2\rangle)$. Because it is an eigenstate of the Hamiltonian, we have $\PEqfi(|\psi_1\rangle\langle \psi_1|)=0$. However, by applying a Hamming distance preserving unitary 
	\begin{equation}
		V=\begin{bmatrix}
			0&1&0&0\\
			1&0&0&0\\
			0&0&0&1\\
			0&0&1&0\\
		\end{bmatrix},
	\end{equation}
 $|\psi_1\rangle$ is converted to $|\psi_2\rangle=V|\psi_1\rangle=(|0\rangle+|3\rangle)/\sqrt{2}$, which is one of the two-qubit optimal probe states for phase estimation. It means that a free unitary operation in the resource theory of dephasing estimation can transform a free state in the resource theory of $U(1)$ asymmetry to a maximally resourceful state in the latter resource theory. This shows the incompatibility of the two resource theories.

This example is generalized to general two-qubit pure states as follows.
\begin{proposition}
Let $|\psi^{(2)}\rangle=\sum_{x=0}^{3}e^{-i\phi_x}\sqrt{\psi_x}|x\rangle$ be a two-qubit pure state.	Hamming distance preserving unitary operations can increase the QFI of $|\psi^{(2)}\rangle$ in phase estimation, if and only if $g(\psi_1,\psi_2)> g(\psi_0,\psi_3)$, where $g(x,y)\equiv x+y-(x-y)^2$.
\end{proposition}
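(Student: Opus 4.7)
My strategy is to enumerate the Hamming-distance-preserving (HDP) permutations of $2$-bit strings and compare the resulting $\PEqfi$ values to the original. By \textbf{Corollary \ref{p1}}, every HDP unitary has the form $U = \sum_x e^{-i\omega_x}|\pi(x)\rangle\langle x|$ with $\pi$ a Hamming-distance-preserving permutation on $\NS$. Because $H$ is diagonal in the computational basis, $\PEqfi(U|\psi^{(2)}\rangle\langle\psi^{(2)}|U^\dagger) = 4\Delta H^2_{U|\psi^{(2)}\rangle}$ depends only on the probabilities $\{\psi_{\pi^{-1}(y)}\}_y$ and not on any phases, so the problem reduces to comparing variances of $H$ across the allowed relabellings of amplitudes.

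First I would observe that among the $2$-bit strings the pairs at Hamming distance $2$ are exactly $\{0,3\}$ and $\{1,2\}$; hence any HDP permutation either (A) preserves both pairs or (B) swaps them. Within each case, positions $1$ and $2$ carry identical energy $\epsilon$ so intra-pair swaps within $\{1,2\}$ are trivially harmless, and the variance of a random variable valued in $\{0,1,2\}$ is invariant under the reflection $N \mapsto 2-N$, so intra-pair swaps within $\{0,3\}$ are also harmless. Therefore all Case A unitaries reproduce the original value $v_A$ (the identity lies in Case A), while all Case B unitaries yield a single common value $v_B$.

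Next, I would rewrite $\Delta H^2/\epsilon^2 = q + 4r - (q+2r)^2$ in the symmetric form $q(p+r) + 4pr$, where $(p,q,r)$ is the distribution of $N_1$ over $\{0,1,2\}$; substituting $(\psi_0,\psi_1+\psi_2,\psi_3)$ and $(\psi_1,\psi_0+\psi_3,\psi_2)$ yields
\begin{equation*}
v_A = (\psi_0+\psi_3)(\psi_1+\psi_2) + 4\psi_0\psi_3, \quad v_B = (\psi_0+\psi_3)(\psi_1+\psi_2) + 4\psi_1\psi_2,
\end{equation*}
so an HDP unitary can strictly increase $\PEqfi$ if and only if $\psi_1\psi_2 > \psi_0\psi_3$. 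Finally I would expand $g(\psi_1,\psi_2) - g(\psi_0,\psi_3)$ and use $\sum_x\psi_x = 1$ to cancel the linear piece $S-T$ against $-(S^2-T^2) = -(S-T)$, where $S=\psi_1+\psi_2$ and $T=\psi_0+\psi_3$; the remainder simplifies to $4(\psi_1\psi_2-\psi_0\psi_3)$, closing the equivalence. The algebra is routine; the main conceptual step is recognizing that the eight HDP permutations collapse into only two QFI equivalence classes, which hinges on the reflection symmetry of the three-valued variance noted above.
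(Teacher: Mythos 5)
Your proposal is correct and follows essentially the same route as the paper: reduce to the permutation-induced relabelling of the probabilities $\{\psi_x\}$, compute the resulting variance of $H$, and compare $g(\psi_0,\psi_3)$ with $g(\psi_1,\psi_2)$. In fact your argument is slightly more complete than the paper's, since the paper merely asserts that every Hamming distance preserving unitary yields one of the two values of $\PEqfi$, whereas you justify this by observing that the distance-$2$ pairs $\{0,3\}$ and $\{1,2\}$ must be either both preserved or swapped, and that intra-pair swaps leave the variance unchanged.
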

\begin{proof}
		The QFI of the state $|\psi^{(2)}\rangle$ in phase estimation is calculated as
			\begin{equation}
				\label{2_qubit_pure_PEQFI}
					\PEqfi(|\psi^{(2)}\rangle \langle \psi^{(2)}|)=g(\psi_0,\psi_3)\epsilon^2.
			\end{equation}
After the action of the unitary $V$, the QFI of phase estimation becomes 
\begin{equation}
		\label{HDPUO_2_qubit_pure_PEQFI}
		\PEqfi(V|\psi^{(2)}\rangle \langle \psi^{(2)}|V^\dagger)=g(\psi_1,\psi_2)\epsilon^2,
\end{equation}
	which is larger than Eq.~(\ref{2_qubit_pure_PEQFI}) if $g(\psi_1,\psi_2)> g(\psi_0,\psi_3)$. 
 
 Moreover, when other Hamming preserving distance unitary operations are performed on the state $|\psi^{(2)}\rangle$, the resulted $\PEqfi$ is either in the form of Eq.~(\ref{2_qubit_pure_PEQFI}) or Eq.~(\ref{HDPUO_2_qubit_pure_PEQFI}). Therefore, the QFI of $|\psi^{(2)}\rangle$ in phase estimation can be increased only if $g(\psi_1,\psi_2)> g(\psi_0,\psi_3)$.
\end{proof}

Interestingly, some nonunitary operations in SHP also have the ability in improving $\PEqfi$ while preserving $\PDqfi$. Consider an $n$-qubit mixed state in the following form
\begin{equation}
    \rho=\sum_{x:x<\Bar{x}} p_x |\psi(x)\rangle\langle \psi(x)|,
\end{equation}
where $|\psi(x)\rangle=\cos\zeta|x\rangle+\sin\zeta e^{i\phi_x}|\Bar{x}\rangle$, $\Bar{x}$ stands for the bitwise bit flip of $x$, and $\zeta$ is independent of $x$. It satisfies conditions (C1) and (C2) for coherence merge in the same Hamming mode. Therefore, we apply the merging operation defined in Eq. (\ref{eq:Kraus_merge}) with $i=0$, $j=2^n-1$, $y(x)=\Bar{x}$, $p_{i|x}=p_{j|\Bar{x}}=1$ and $p_{j|x}=p_{i|\Bar{x}}=0$. This results in a pure state of the form
\begin{equation}
    \rho'=|\psi(0)\rangle\langle \psi(0)|.
\end{equation}
By the convexity of QFI, we have $\PEqfi(\rho')>\sum_x p_x\PEqfi(|\psi(x)\rangle\langle \psi(x)|)\geq \PEqfi(\rho)$. This means that $\PEqfi$ is increased. Meanwhile, $\PDqfi(\rho)=\PDqfi(\rho')$. The reason is as follows. On the one hand, from the monotonicity of $\PDqfi$ under SHP, we have $\PDqfi(\rho)\geq\PDqfi(\rho')$. On the other hand, from the convexity of QFI, we have $\PDqfi(\rho')=\sum_x p_x\PDqfi(|\psi(x)\rangle\langle \psi(x)|)\geq\PDqfi(\rho)$.

To sum up, one can use coherence merge in the same Hamming mode to increase the ability of a probe state in phase estimation, and meanwhile, preserve its ability of dephasing estimation.


 
\section{CONCLUSION}
\label{CONCLUSION}
In this work, we study the resource theory of dephasing estimation in multiqubit systems. Due to the monotonicity of QFI, we define two sets of free operations in the resource theory of dephasing estimation, HDP and SHP. Under these free operations, the problem of finding optimal probes can be converted to the problem of state transformations. This resource-theoretic approach reduces the complexity to calculate the QFI. In the present resource theory, the uniform superposition states are golden states. This implies that there exists the connection between our resource theory and the resource theory of quantum coherence. Therefore, we investigate the relation between free operations of these two resource theories. In the single-qubit case, SIO and DIO, which are free operations of the resource theory of quantum coherence, are equivalent to SHP (or HDP). Furthermore, for multiqubit states, the relation of these operations is depicted in FIG.$\ref{f_2}$. Finally, we also compare the resource theories of $U(1)$ asymmetry and dephasing estimation. By employing SHP, it is possible to enhance the performance of a probe state in phase estimation, while preserving its ability of dephasing estimation, which implies the incompatibility of these two resource theories.
\begin{acknowledgments}
This work was Supported by National Natural Science Foundation of China under Grant No. 11774205.
\end{acknowledgments}

\appendix
\section{SINGLE-QUBIT HDP IS A STRICT SUBSET OF SINGLE-QUBIT OPERATIONS WHICH DOES NOT INCREASE THE QFI OF DEPHASING ESTIMATION}
\label{sec:appendix-a_0}
For a single-qubit state $\sigma=(r\cos\phi,r\sin\phi,z)$, the QFI of dephasing estimation is independent on the phase $\phi$ due to the rotational symmetry. Therefore, we only consider the case $\phi=0$ in the following discussion. For a single-qubit state $\sigma_0=(r,0,z)$, the QFI of dephasing estimation is
\begin{equation}\label{eq:pdqfi_qubit}
\PDqfi(\sigma_0)=\dfrac{r^2e^{-2\theta}(1-z^2)}{1-z^2-r^2e^{-2\theta}}.
\end{equation}
In the following, we will show a channel $\mathcal{Z}\notin\HDP$, which cannot increase the QFI of phase estimation. The Choi matrix of the channel $\mathcal{Z}$ is given by
\begin{equation}
    J_\mathcal{Z}=\begin{bmatrix}
		\frac{1}{2}&0&\frac{1}{4}&\frac{1}{4}\\
		0&\frac{1}{2}&\frac{1}{4}&-\frac{1}{4}\\
		\frac{1}{4}&\frac{1}{4}&\frac{1}{2}&0\\
		\frac{1}{4}&-\frac{1}{4}&0&\frac{1}{2}
		\end{bmatrix}.
\end{equation}
Its Kraus operators are given by
\begin{equation}
\begin{split}
Z_0&=\dfrac{1}{2}\begin{bmatrix}1&0\\0&1\end{bmatrix}, Z_1=\dfrac{1}{2}\begin{bmatrix}0&1\\1&0\end{bmatrix},\\Z_2&=\dfrac{1}{2}\begin{bmatrix}1&1\\0&0\end{bmatrix}, Z_3=\dfrac{1}{2}\begin{bmatrix}0&0\\1&-1\end{bmatrix}.\end{split}
\end{equation}
Then the Bloch vector of $\mathcal{Z}(\sigma_0)$ can be written as $(\frac{1}{2}r,0,\frac{1}{2}r)$. The corresponding QFI of dephasing estimation is
\begin{equation}
    \PDqfi(\mathcal{Z}(\sigma_0))=\dfrac{\frac{1}{4}r^2e^{-2\theta}(1-\frac{1}{4}r^2)}{1-\frac{1}{4}r^2-\frac{1}{4}r^2e^{-2\theta}}
\end{equation}
When $r=0$, we find $\PDqfi(\sigma_0)=\PDqfi(\mathcal{Z}(\sigma_0))=0$. Then we discuss the case $r\neq0$. The inequality, $\PDqfi(\sigma_0)\geq \PDqfi(\mathcal{Z}(\sigma_0))$, is equivalent to
\begin{equation}
    \dfrac{3}{r^2e^{-2\theta}}-\dfrac{4}{4-r^2}+\dfrac{1}{1-z^2}\geq 0, \forall r,z.
\end{equation}
 When $z^2=0$ and $r^2=1$, the left side of the inequality reaches its minimum $3e^{2\theta}-1/3$, which is always greater than $0$. Therefore, the channel $\mathcal{Z}$ cannot increase the QFI of dephasing estimation and does not belong to $\HDP$.

\section{HAMMING DISTANCE PRESERVING FUNCTION}
\label{sec:appendix-a}
In this appendix, we first give the definition of a Hamming distance preserving function and then derive the explicit form of these functions.
\begin{definition}
Let $\NS$ be the set of $n$-bit strings.
	The one-to-one function $\pi^{n}_i:\NS\mapsto\NS$ is said to be a Hamming distance preserving function if 
	\begin{equation}
		h(x,y)=h(\pi^n_i(x),\pi^n_i(y)),\forall x,y\in\NS.
	\end{equation}
The set of $n$-bit Hamming distance preserving functions is denoted as $M^n$.
\end{definition}

Next we study the explicit form of Hamming distance preserving functions. We start with the simplest case where the input/output string consists of a single bit. 
Obviously, $M^1=\{\pi^1_0,\pi_1^1\}$ with
\begin{eqnarray}
	\pi^1_0: &0\mapsto0,1\mapsto1\nonumber,\\
	\pi_1^1: &0\mapsto1,1\mapsto0.
\end{eqnarray}

In the $2$-bit case, there are eight elements in $M^2$. Precisely, we denote by $x_m$ the $m$-th bit of string $x$, and define two sets of $2$-bit functions,
$R^2=\{r_0^2,r_1^2\}$ with $r_0^2(x_1x_0)=x_1x_0$, $r_1^2(x_1x_0)=x_0x_1$, and $Q^2=\{q_z\}_{z=0}^3$ with $(q_z(x))_l=z_l\oplus x_l,\ l=0,1$. Then the set of two-bit Hamming distance preserving functions can be written as
\begin{equation}
    M^2=Q^2\circ R^2.
\end{equation}
Here by $\circ$ we mean $A\circ B=\{m|m=ab,a\in A, b\in B\}$. This $2$-bit case can be generalized to the $n$-bit case.


\begin{proposition}
   The set of $n$-bit Hamming distance preserving functions can be written as
    \begin{equation}\label{eq:HDPF}
    M^n=Q^n\circ R^n,
\end{equation}
where $Q^n:=\{ q_z|q_z:\NS\mapsto\NS,(q_z(x))_l=z_l\oplus x_l,l\in\{0,\dots,n-1\}\}_{z=0}^{2^n-1}$, and  $R^n$ represents the set of $n$-bit functions which reorder the bits in an input string.
\end{proposition}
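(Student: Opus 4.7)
The plan is to prove the two inclusions separately, with the nontrivial direction being $M^n\subseteq Q^n\circ R^n$. For $Q^n\circ R^n\subseteq M^n$, I would simply note that bitwise XOR with a constant preserves Hamming distance (since $h(z\oplus x,z\oplus y)=\sum_m (z_m\oplus x_m)\oplus(z_m\oplus y_m)=\sum_m x_m\oplus y_m$), and that any bit-reordering permutation is an isometry of Hamming distance by construction. Composing two Hamming-distance-preserving maps gives another, so every element of $Q^n\circ R^n$ lies in $M^n$.

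For the reverse inclusion, fix $\pi\in M^n$ and set $z=\pi(0^n)$, where $0^n$ denotes the all-zero string. Define $\pi'(x)\coloneqq z\oplus \pi(x)$. Then $\pi'$ is still a bijection, it still preserves Hamming distance (by the easy direction applied to $q_z$), and it satisfies $\pi'(0^n)=0^n$. Thus it suffices to show that any Hamming-distance-preserving bijection $\pi'$ fixing $0^n$ is a bit-reordering, i.e.\ an element of $R^n$; then $\pi=q_z\circ\pi'\in Q^n\circ R^n$.

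To carry this out, let $e_i$ denote the string with a single $1$ in position $i$. Since $h(e_i,0^n)=1$ and $\pi'$ preserves Hamming distance with $\pi'(0^n)=0^n$, the image $\pi'(e_i)$ has weight $1$, so $\pi'(e_i)=e_{\sigma(i)}$ for a well-defined map $\sigma:\{0,\dots,n-1\}\to\{0,\dots,n-1\}$; bijectivity of $\pi'$ forces $\sigma$ to be a permutation. For a general $x$ of Hamming weight $k$, write $S(x)\coloneqq\{i:x_i=1\}$. The image $\pi'(x)$ has weight $k$ (distance $k$ from $0^n$) and satisfies $h(\pi'(x),e_{\sigma(i)})=h(x,e_i)=k-1$ for every $i\in S(x)$, which forces $\sigma(i)\in S(\pi'(x))$. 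Since $|S(\pi'(x))|=k=|\sigma(S(x))|$, we conclude $S(\pi'(x))=\sigma(S(x))$, i.e.\ $\pi'(x)_{\sigma(i)}=x_i$ for every $i$, so $\pi'$ is precisely the bit-reordering associated with $\sigma$.

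The main obstacle I anticipate is the last step: having nailed down $\pi'$ on weight-$1$ strings, one still has to promote this to \emph{all} strings without extra case analysis. The cleanest way, as sketched above, is to read off the support of $\pi'(x)$ directly from Hamming-distance constraints to the basis strings $e_{\sigma(i)}$, rather than inducting on Hamming weight (which would require tracking ambiguities among weight-$k$ strings equidistant from several basis vectors). Once this support-matching argument is in hand, the decomposition $\pi=q_z\circ r_\sigma$ follows immediately, completing the identification $M^n=Q^n\circ R^n$.
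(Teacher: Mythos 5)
Your proof is correct, and it takes a genuinely different route from the paper. The paper proves only the easy inclusion $Q^n\circ R^n\subseteq M^n$ directly and then closes the gap by a counting argument: it shows $|M^n|=2^n\,n!$ through a chain of reductions (first classifying where $0$ can go, then where each $2^j$ can go, and finally showing that the stabilizer of $\{0,2^0,\dots,2^{n-1}\}$ is trivial), and matches this against $|Q^n\circ R^n|=|Q^n|\cdot|R^n|=2^n\,n!$. You instead give a direct structural decomposition: normalize by $q_z$ with $z=\pi(0^n)$ so that $0^n$ is fixed, read off a permutation $\sigma$ from the images of the weight-one strings, and then pin down $\pi'(x)$ for arbitrary $x$ by the support-matching argument $h(\pi'(x),e_{\sigma(i)})=k-1\Rightarrow\sigma(i)\in S(\pi'(x))$ together with the cardinality match $|S(\pi'(x))|=k=|\sigma(S(x))|$. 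That last step is sound: for a weight-$k$ string $w$, $h(w,e_j)$ is $k-1$ or $k+1$ according to whether $j\in S(w)$, so the distance constraint really does force the support. Your approach buys an explicit factorization $\pi=q_z\circ r_\sigma$ rather than inferring existence from equal cardinalities, and it avoids having to verify that the sets $M^n_{0|0\cdots j}$ all have equal size at each stage of the reduction; the paper's counting argument, on the other hand, yields the cardinality $|M^n|=2^n\,n!$ as a byproduct, which your argument also gives but only after noting that the decomposition $(z,\sigma)$ is unique. Both proofs ultimately rest on the same elementary observation about distances to the weight-one strings, but yours deploys it once globally while the paper uses it only in the final stabilizer-is-trivial step.
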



\begin{proof}
Firstly, both $r\in R^n$ and $q\in Q^n$ are Hamming distance preserving functions, so their multiplications also preserve Hamming distance, i.e., $Q^n\circ R^n\subseteq M^n$. 
In order to prove Eq.~(\ref{eq:HDPF}), we only need to prove that $|M^n|=|Q^n\circ R^n|$, where $|X|$ denotes the number of elements in the set $X$.
For this purpose, we check that $|Q_n|=2^n$, $|R^n|=n!$, and $|Q^n\circ R^n|=|Q^n|\cdot|R^n|$. In the following, we focus on proving that
\begin{equation}\label{eq:M_number}
    |M^n|=2^nn!.
\end{equation}

Let $M_z^n:=\{m|m\in M^n,m:0\mapsto z\}$ with $z\in\NS$ be a subset of $M^n$. 
Then $\left| M_z^n\right|=\left| M_0^n\right|$ for all $z=0,\dots,2^n-1$, because $M_z=q_z^n M_0$.
It follows that
\begin{equation}
    |M^n|=2^n\left| M_0^n\right|.\label{eq:M^n}
\end{equation}

Next we consider a function $m\in M_0^n$ which satisfies $m:2^0\mapsto i$.
From the Hamming distance preserving condition $h(0,1)=h(0,i)$, there is exactly one bit in string $i$ which equals to $1$, and hence $i=2^j$ with $j=0,\dots,n$. Now we define $M_{0|j}^n:=\{m|m\in M_0^n,m:2^0 \mapsto 2^j\}$, which is a subset of $M_0^n$.
Then any element $m_{j}\in M_{0|j}^n$ is related to an element $m_{0}\in M_{0|0}^n$ via $m_{j}=c^n_{j0}m_{0}$, where $c^n_{j0}$ represents an $n$-bit function which exchanges the position of the zeroth bit and $j$-th bit of its input string.
It follows that $M_{0|j}^n=c_{j0}M_{0|0}^n$, and consequently, 
\begin{equation}
    \left| M_0^n\right|=n\left| M_{0|0}^n\right|.
\end{equation}

Further, consider a function $m\in M_{0|0}^n$, which satisfies $m:2^1\mapsto i$.
The Hamming distance preserving condition requires that $h(0,2)=h(0,i)$ and $h(1,2)=h(1,i)$. This is equivalent to say that there is exactly one bit in the string $i$ which equals to 1, and moreover, $i^0=0$. Hence, we have $i=2^j$ with $j=1,\cdots,n-1$. Let $M_{0|0j}^n:=\{m|m\in M_{0|0}^n,m:2^1\mapsto 2^j\}$, and then we have $|M_{0|0}^n|=\sum_{j=1}^{n-1}|M_{0|0j}^n|$, and $|M_{0|0j}^n|=|M_{0|01}^n|$, $\forall j=1,\cdots,n-1$, or equivalently,
\begin{equation}
    \left| M_{0|0}^n\right|=(n-1)\left| M_{0|01}^n\right|.
\end{equation}
We apply the above discussion $n$ times and arrive at
\begin{equation}
    \left| M_0^n\right|=n!\left| M_{0|0\cdots n-1}^n\right|,\label{eq:M_0}
\end{equation}
where $M_{0|0\cdots n-1}^n=\{m|m\in M^n, m:2^j\mapsto2^j,\forall j=0,\cdots,n-1\}$.

For $m\in M_{0|0\cdots n-1}^n$, the Hamming distance preserving condition requires that $h(m(x),0)=h(x,0)$ and
\begin{equation}
    h(m(x),2^j)=\big\{\begin{array}{cc}
       h(x,0),  & x_j=0, \\
       h(x,0)-1,  & x_j=1,
    \end{array}
\end{equation}
for all $x\in\NS$ and $j=0,\cdots,n-1$. This immediately implies that $m(x)=x$. Therefore, the only function in $M_{0|0\cdots n-1}^n$ is identity, and
\begin{equation}
    \left|M_{0|0\cdots n-1}^n\right|=1.\label{eq:M_0012}
\end{equation}
Putting eqs.~(\ref{eq:M^n}), (\ref{eq:M_0}), and (\ref{eq:M_0012}) together, we arrive at Eq.~(\ref{eq:M_number}). 
\end{proof}

\section{THE CHANNEL $\mathcal{W}$ BELONGS TO $\HDP\setminus\SIO$}
\label{sec:appendix-b}
	The channel $\mathcal{W}$ is defined as Eq.~(\ref{DIFF_SHP}). Its Kraus operators are given by
	\begin{equation}
		\begin{split}
			W_0&=\dfrac{1}{2}|0\rangle \langle 1|+\dfrac{1}{2\sqrt{2}}(|1\rangle \langle 0|+|2\rangle \langle 0|),\\
			W_1&=\dfrac{1}{2}|1\rangle \langle 1|+\dfrac{1}{2\sqrt{2}}(|0\rangle \langle 0|+|3\rangle \langle 0|),\\
			W_2&=\dfrac{1}{2}|2\rangle \langle 1|+\dfrac{1}{2\sqrt{2}}(|0\rangle \langle 0|-|3\rangle \langle 0|),\\
			W_3&=\dfrac{1}{2}|3\rangle \langle 1|+\dfrac{1}{2\sqrt{2}}(|1\rangle \langle 0|-|2\rangle \langle 0|),\\
			W_4&=\iden-|0\rangle \langle 0|-|1\rangle \langle 1|.
		\end{split}	
	\end{equation}
 
    We first introduce the definition of the $l_1$ norm of coherence
    \begin{equation}
      C_{l_1}(\rho)=\sum_{i,j:i\neq j}|\rho_{i,j}|.
    \end{equation}
    The $l_1$ norm of coherence is monotonic under $\SIO$ \cite{PhysRevLett.113.140401}, that is, $\mathcal{C}_{l_1}(\mathcal{S}(\rho))\ge\mathcal{C}_{l_1}(\rho)$, where $\mathcal{S}\in \SIO$. Let us take the state $\rho$ as follows,
	\begin{equation}
		\rho=\dfrac{1}{2}(|0\rangle \langle 0|+|0\rangle \langle 1|+|1\rangle \langle 0|+|1\rangle \langle 1|).
	\end{equation}
	We calculate $\mathcal{C}_{l_1}(\rho)=1 $ and $\mathcal{C}_{l_1}(\mathcal{W}(\rho))=\sqrt{2}$. Therefore, the HDP channel $\mathcal{W}$ can increase the $l_1$ norm of coherence. It implies $\mathcal{W}\in\HDP\setminus\SIO$.

\nocite{*}

\bibliography{references}
\end{document}